\newcounter{algsubstate}
\renewcommand{\thealgsubstate}{\alph{algsubstate}}
\newenvironment{algsubstates}
  {\setcounter{algsubstate}{0}%
   \renewcommand{\State}{%
     \stepcounter{algsubstate}%
     \Statex {\footnotesize\thealgsubstate:}\space}}
  {}
\newcommand{\norm}[1]   {\left\| #1 \right\|}
\newcommand{\abs}[1]    {\left| #1 \right|}
\newcommand{\GL}{{\sc glasso}}
\newcommand{\DPGL}{{\sc dpglasso}}
\newcommand{\GN}{{\sc gelnet}}
\newcommand{\DPGN}{{\sc dpgelnet}}
\newcommand{\RP}{{\sc rope}}
\newcommand{\Bo}{\boldsymbol}
\newcommand{\Ma}{\mathbf}
\newcommand{\oursoftware}{\textsf{GLassoElnetFast}}
\newcommand{\Rsoftware}{\textsf{R}}
\newtheorem{theorem}{Theorem}
\newtheorem{lemma}{Lemma}
\newtheorem{proof}{Proof}
\newtheorem{notation}{Notation}
\begin{document}
%\setstretch{1.25}
\title{Graphical Elastic Net and Target Matrices: Fast Algorithms and Software for Sparse Precision Matrix Estimation}

\author{Solt Kov\'acs${}^{1}$, Tobias Ruckstuhl${}^{1}$, Helena Obrist${}^{1}$, Peter B\"uhlmann${}^{1}$\\
\vspace{0.1cm}\\
{\small${}^{1}$Seminar for Statistics, ETH Zurich, Switzerland}}

\maketitle

\begin{abstract}
\label{chp:abstract}
We consider estimation of undirected Gaussian graphical models and inverse covariances in high-dimensional scenarios by penalizing the corresponding precision matrix. While single $L_1$ (Graphical Lasso) and $L_2$ (Graphical Ridge) penalties for the precision matrix have already been studied, we propose the combination of both, yielding an Elastic Net type penalty. We enable additional flexibility by allowing to include diagonal target matrices for the precision matrix. We generalize existing algorithms for the Graphical Lasso and provide corresponding software with an efficient implementation to facilitate usage for practitioners. Our software borrows computationally favorable parts from a number of existing packages for the Graphical Lasso, leading to an overall fast(er) implementation and at the same time yielding also much more methodological flexibility. 
\end{abstract}

\noindent\textbf{Keywords:}
Gaussian graphical model;
\oursoftware\ R-package;
Graphical Lasso;
High-correlation;
High-dimensional;
ROPE;
Sparsity.

\section{Introduction and motivation} \label{chp:introduction}

The estimation of precision matrices (the inverse of covariance matrices) in high-dimensional settings, where traditional estimators perform poorly or do not even exist (e.g.~the inverse of the sample covariance matrix), has developed considerably over the past years. For multivariate Gaussian observations (with mean vector $\mu$ and covariance matrix $\mathbf{\Sigma}$) a zero off-diagonal entry of the precision matrix $\mathbf{\Sigma}^{-1}$ encodes conditional independence of the two corresponding variables given all the other ones. The resulting conditional independence graph, with variables as nodes and edges for nonzero off-diagonal entries, is called a Gaussian graphical model (GGM, \citealp{lauritzen}).

\citet{MeinsBuhl} proposed a nodewise regression approach (selecting one variable as the response while taking all remaining ones as predictors and repeating this for each variable) using the Lasso \citep{lasso_original}. The zero entries in the estimated Lasso regression coefficients serve as an estimate of the graphical model, i.e.~the zero pattern, but it does not give a full estimate of the precision matrix. To overcome this problem, \citet{MLErefit} proposed to use an unpenalized maximum likelihood estimator for the covariance matrix based on the previously estimated graph. \cite{Yuanlinprog} used the Dantzig selector \citep{CandesTao} for regression instead of the Lasso to get a preliminary estimate, followed by solving a linear program to obtain a symmetric matrix which is close to the preliminary estimate (even being positive definite with high probability).

A different approach for estimating precision matrices was proposed by \citet{YuanLin} based on the maximization of an $L_1$-penalized Gaussian log-likelihood for positive definite matrices. The $L_1$-penalty for the precision matrix encourages sparsity and thus the zero pattern of the estimated precision matrix serves as a direct estimate of the underlying GGM. Early proposals for the (challenging) computation of the estimates have been given by \citet{YuanLin}, \citet{Banerjee_dAspremont} as well as using the Graphical Lasso (\GL) by \citet{glasso}. The \GL\ algorithm has been widely used in the past and thus nowadays the $L_1$-penalized Gaussian log-likelihood precision matrix estimation problem itself is also called the Graphical Lasso (or \GL). Other recent approaches for the computation of the Graphical Lasso estimator include for example those of \cite{Scheinberg_GGM_computation, BIGQUIC_Hsieh, QUIC_Hsieh} and \cite{Atchade_Mazumder_Chen}. Theoretical properties of the Graphical Lasso estimator were investigated by \citet{YuanLin,Rothman,sparsistency} and \cite{Ravikumar} (up to a small difference of whether to penalize the diagonal elements of the precision matrix).

The Graphical Lasso has been modified and adapted to several more specific scenarios, e.g. in the presence of missing data \citep{missglasso}, change points \citep{Londschien}, unknown block structure \citep{Marlin}, in joint estimation proposals in the presence of groups sharing some information \citep{Guo_joint, Daneher_joint, Shan_unbalanced_multiclass}, with a prior clustering of the variables \citep{cluster_glasso}, or even for confidence intervals based on a desparsified Graphical Lasso estimator \citep{jankova}. While $L_1$-penalized estimation for GGMs is very popular, numerous other approaches have been proposed as well, e.g.~SCAD type penalty \citep{fan2009_chain_network}, the CLIME estimator \citep{CLIME_Cai_etal}, the SPACE estimator for partial correlation estimation \citep{Peng_BIC}, banded estimation \citep{banded} or shrinkage-type approaches \citep{wolf_math}.

\subsection{Related work} \label{sec:relatedwork}
Let $X_i \in \mathbb{R}^p$ for $i=1, \ldots, n $ be i.i.d. $p$-dimensional Gaussian random vectors with mean $\mu \in \mathbb{R}^p$ and positive definite covariance matrix $\mathbf{\Sigma} \in \mathbb{R}^{p \times p}$ with corresponding precision matrix $\mathbf{\Theta} = \mathbf{\Sigma}^{-1}$. Let $\mathbf{X} \in \mathbb{R}^{n \times p}$ be the data matrix comprising of the $n$ observations, i.e. $\mathbf{X}=(X_1,\ldots,X_n)^\top$. Without loss of generality assume that the columns of $\mathbf{X}$ are centered, such that the sample covariance matrix is given by $\mathbf{S}=\mathbf{X}^T\mathbf{X}/n$. Furthermore, for a given matrix $\mathbf{A} \in \mathbb{R}^{p \times p}$ let the matrix norms $\norm{\cdot}_1$ and $\norm{\cdot}_2$ be defined as $\norm{\mathbf{A}}_1=\sum_{i,j=1}^{p} \abs{A_{ij}} \text{ and } \norm{\mathbf{A}}_2=\norm{\mathbf{A}}_{F}=\sqrt{ \sum_{i,j=1}^{p} \abs{A_{ij}}^2} = \sqrt{ \text{tr}(\mathbf{A}^T \mathbf{A}) }$ and similarly let $\norm{\cdot}_{1^-}$ and $\norm{\cdot}_{2^-}$ be the norms without the diagonal entries, i.e. $\norm{\mathbf{A}}_{1^-}=\sum_{i\neq j} \abs{A_{ij}}$ and $\norm{\mathbf{A}}_{2^-}=\sqrt{ \sum_{i\neq j} \abs{A_{ij}}^2}$. Let diag$(\mathbf{A})$ be the diagonal matrix with the diagonal elements of the matrix $\mathbf{A}$ as its entries.
Finally, denote by $\textbf{A} \succ 0$ that the matrix $\textbf{A}$ is positive definite. Consider the following type of estimation problem for estimating the unknown precision matrix $\mathbf{\Theta}$ based on the $n$ observations:
\begin{equation}
\label{problem_setup}
\hat{\mathbf{\Theta}}(\lambda, \alpha, \textbf{T}) = 
    \underset{\mathbf{\Theta} \succ 0 }{\text{argmin}}  \{ -\text{log det} \mathbf{\Theta} + \text{tr} (\textbf{S} \mathbf{\Theta}) + \lambda (\alpha \norm{\mathbf{\Theta}- \textbf{T}}_1+\tfrac{1-\alpha}{2} \norm{\mathbf{\Theta}-\textbf{T}}_2^2) \},
\end{equation}
where $\textbf{T}\in \mathbb{R}^{p \times p}$ is a known (typically diagonal) positive semi-definite target matrix, $\lambda \geq 0$ and $\alpha \in [0,1]$ two tuning parameters. The problem can also be posed without penalizing the diagonal elements, i.e., with $\norm{\mathbf{\Theta}- \textbf{T}}_{1^-}$ and $\norm{\mathbf{\Theta}-\textbf{T}}_{2^-}^2$ and instead of the scalar $\lambda$ one could further generalize the estimation by allowing entry-wise penalties. For the ease of reading, we will focus on the more restrictive formulation in equation~\eqref{problem_setup} throughout the paper, but we enable the option for entry-wise penalties in our software. Note that due to the $L_2$-penalty term, the scaling of the variables matters. Thus, it matters whether as input $\textbf{S}$, the sample covariance or sample correlation matrix is provided. The letter would be recommended in practice in most scenarios.

The formulation \eqref{problem_setup} encompasses several previously proposed high-dimensional precision matrix estimators. The most classical one is the Graphical Lasso estimator \citep{glasso}, occurring in the case of $\alpha = 1$ and for the target matrix being the zero matrix. The $L_2$-penalty (i.e., $\alpha=0$) was recently proposed independently by \cite{Wieringen_Peeters} and \cite{Kuismin_ROPE}. The latter authors refer to it as the Ridge type operator for precision matrix estimation (\RP) with different target matrices, while \cite{Wieringen_Peeters} called it the Alternative Ridge Precision estimator, with Type I for the case of zero as target matrix and Type II for nonzero positive definite target matrix. For the $L_2$-penalized estimators, closed form solutions exist (even with target matrices), unlike for the Graphical Lasso, where various iterative procedures for the computations have been proposed, as mentioned earlier. The only proposal we are aware of incorporating target matrices for the $L_1$-penalty is that of \cite{Wieringen_iterative}. He is iteratively applying his generalized Ridge estimator (with elementwise differing~$\lambda$ values) to approximate the loss function for the $L_1$ case. However, when aiming for reasonably accurate estimates, this approach is computationally not attractive even for only moderately sized problems.
Now let us mention the combination of $L_1$ and $L_2$-penalties, i.e., the case $\alpha \in (0,1)$ for problem~\eqref{problem_setup}. Analogously to the Elastic Net regression \citep{elastic_net}, we call this version the Graphical Elastic Net. Adapting the iterative procedure of \cite{Wieringen_iterative} for the Elastic Net problem is possible, but again, is not attractive computationally (see section~\ref{chp:comptime}). Other work in the context of Elastic Net we are aware of are without target matrices (i.e. $\textbf{T}=\textbf{0}$). Genevera Allen in her 2010 Stanford University PhD thesis \citep[Algorithm 2]{allen_phd} adapted the \GL\ algorithm of \cite{glasso}, similar to
our proposal in section \ref{sec:gelnetdpgelnet}. %
\cite{Atchade_Mazumder_Chen} proposed stochastic proximal optimization methods to obtain near-optimal (i.e., approximate) solutions for regularized precision matrix estimation, and their approach incorporates also Elastic Net penalties (without target). They focus on large-scale problems, where the computation of exact solutions becomes impractical. 
Lastly, \cite{rothman_elastic} considers a sparse covariance matrix estimator and his proposed algorithm bears resemblance to our algorithms for Elastic Net type penalties.
We are not aware of publicly available software corresponding to these proposals with Elastic Net penalties, moreover,  efficient software for target matrices is limited to the Ridge penalty.

\subsection{Our contribution and outline} \label{sec:ourcontribution}
We focus on the estimation problem \eqref{problem_setup}. Elastic Net type penalties (for Gaussian log-likelihood based precision matrix estimation) could help to obtain both the advantages of $L_2$-penalized estimation (as presented recently by \cite{Wieringen_Peeters} and \cite{Kuismin_ROPE}) as well as benefits of $L_1$-penalization such as sparse precision matrix estimates that are desired for graphical models. Moreover, the inclusion of suitable target matrices could considerably improve estimation results. However, as discussed previously, only special cases have been treated so far in the literature both in terms of  available algorithms as we well as corresponding software. Our goal is to contribute to computational approaches for the general problem \eqref{problem_setup}, i.e., to develop new algorithms in order to allow estimation in more flexible ways, and also to provide software with efficient implementations of these proposals for practitioners. A practice oriented introduction to targets, a motivating real data example and some code snippets showing the usage of the package in section~\ref{chp:methodology_and_software} are aimed to facilitate the understanding and usage.

Our algorithms are building upon the \GL\ algorithm of \cite{glasso} and the more recent \DPGL\ algorithm of \cite{Mazumder2012_dpglasso} that offers certain advantages over its predecessor. We describe these two algorithms (for the case without target matrices) in section \ref{sec:glassodpglasso}. We then introduce our modifications in section \ref{sec:gelnetdpgelnet} that lead to our \GN\ (Graphical Elastic Net) and \DPGN\ algorithms. In section~\ref{chp:target} we further generalize these approaches to include diagonal target matrices. While diagonal matrices are less flexible than arbitrary positive semi-definite target matrices, we note that in practice many fall into this category (e.g.~all target matrices that were used in simulations by \cite{Wieringen_Peeters} and by \cite{Kuismin_ROPE}). Our developed algorithms are supported by mathematical derivations and theory. We present simulations in section~\ref{chp:simulation} comparing our new estimators with Elastic Net penalties and target matrices with the plain Graphical Lasso and also recent Ridge-type approaches. Competitive computational performance is demonstrated in section~\ref{chp:comptime} by benchmarking our software to widely used packages such as the \textsf{glasso} \citep{glasso_package}, the \textsf{glassoFast} \citep{glassofast_package}, and the \textsf{dpglasso} \citep{dpglasso_package} \textsf{R}-packages in terms of computational times.

\section{Methodology and software} 
\label{chp:methodology_and_software}

\subsection{Target types} \label{sec:target_types}
We present some target types that can be used within equation~\eqref{problem_setup} and which were used in the simulations in section \ref{chp:simulation}. 

\begin{itemize}
    \item True Diagonal: Taking the diagonal of the underlying true precision matrix. Of course, using this target is only possible in simulation settings, where the original precision matrix is known.
    \item Identity: Taking the identity matrix as target. This choice of the target is very conservative (i.e., having rather small entries) when the input $\textbf{S}$ in problem~\eqref{problem_setup} is the empirical correlation matrix.
    \item $v$-Identity: Multiplying the identity matrix with a scalar $v$, where $v$ is the inverse of the mean of all diagonal entries in the sample covariance matrix (see \citealp{Kuismin_ROPE}).
    \item Eigenvalue: The ``DAIE'' target type from the \textsf{default.target} function of the \textsf{rags2ridges} \textsf{R}-package \citep{rags2ridges_package}. This takes a diagonal matrix with average of inverse nonzero eigenvalues of the sample covariance matrix as entries. Eigenvalues under a certain threshold are set to zero.
    \item Maximal Single Correlation: For variable $j$ take from the remaining variables the one that has the highest absolute correlation with $j$ and denote this as $k$. The corresponding correlation is denoted as $\rho_{jk}$. Then set $\mathbf{T}_{jj} = ((1-\abs{\rho_{jk}})\cdot S_{jj})^{-1}$, where $S_{jj}$ is the $j$-th diagonal element of the sample covariance matrix $S$.
    \item Nodewise regression
    \citep{MeinsBuhl}: The idea is similar as with the maximal single correlation, but we would like to allow for more than one predictor. For a variable $j$ apply a 10-fold cross-validation using Lasso regression (using all other variables than the $j$-th one as predictors). Then set the $j$-th diagonal entry of the target matrix as the inverse of the minimal cross-validated error variance. Under suitable conditions, the target matrix diagonals converge to the diagonal values of the true precision matrix $\Theta$.
\end{itemize}

Some further targets are implemented in the \textsf{default.target} function of the \textsf{rags2ridges} \textsf{R}-package of \cite{rags2ridges_package}.

%%%%%%%%%%%%%%%%%%%%%%%%%%%%%%%%%%%%%%%%%%%%%%%%%%%%%%%%%%%%%%%%%%%%%%%%%%%%%%%%%%%%
%%%%%%%%%%%%%%%%%%%%%%%%%%%%%%%%%%%%%%%%%%%%%%%%%%%%%%%%%%%%%%%%%%%%%%%%%%%%%%%%%%%%
%%%%%%%%%%%%%%%%%%%%%%%%%%%%%%%%%%%%%%%%%%%%%%%%%%%%%%%%%%%%%%%%%%%%%%%%%%%%%%%%%%%%

\subsection{Motivation: a small real data example} \label{chp:realdata}

In order to illustrate differences between the methodologies, we apply \GL\ and \GN\ (with $\alpha = 0.5$) to a real data example from \cite{wille_realdata}. The object to study are expressions of $p=39$ isoprenoid genes from $n=118$ samples from the plant Arabidopsis Thaliana. As there is no solid ground truth available, we only briefly show that different estimation approaches lead to different results and thus the additional flexibility provided by target matrices and elastic net type penalties provides additional options for real data applications compared to the \GL. In all examples below the penalization parameter for each algorithms is chosen such that the resulting precision matrix only has 200 non-zero off-diagonal entries, which leads to 100 edges between the genes. In the first example (Figure~\ref{realdata1}) we take the sample correlation matrix for the genes and apply both \GL\ and \GN\ (with $\alpha = 0.5$) without target matrix (i.e.~$\textbf{T}=\textbf{0}$).
\begin{figure}[H]
\includegraphics[width=1\textwidth]{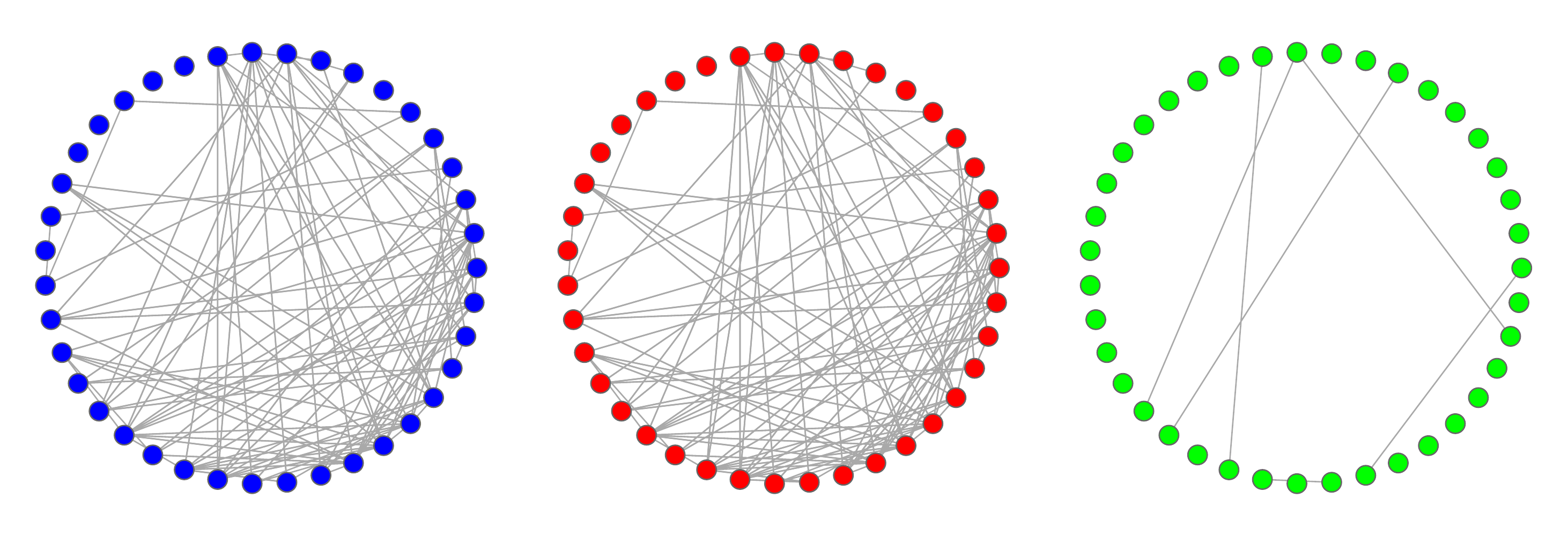}
  \caption{Original data and no target is used (i.e.~$\textbf{T}=\textbf{0}$). The 100 edges chosen by \GL\ (on the left) and \GN\  with $\alpha = 0.5$ (in the middle). Edges, which are only present in one algorithm are displayed on the right. 97 edges are in common.}
\label{realdata1}
\end{figure}
In the second example we adjust for the first principal component and then apply the same procedure to the new sample correlation matrix. Such an adjustment is common to remove potential hidden confounding variables. % 
As shown on the green right plot of Figure~\ref{realdata2}, in this case fewer edges are in common.
\begin{figure}[H]
  \includegraphics[width=1\textwidth]{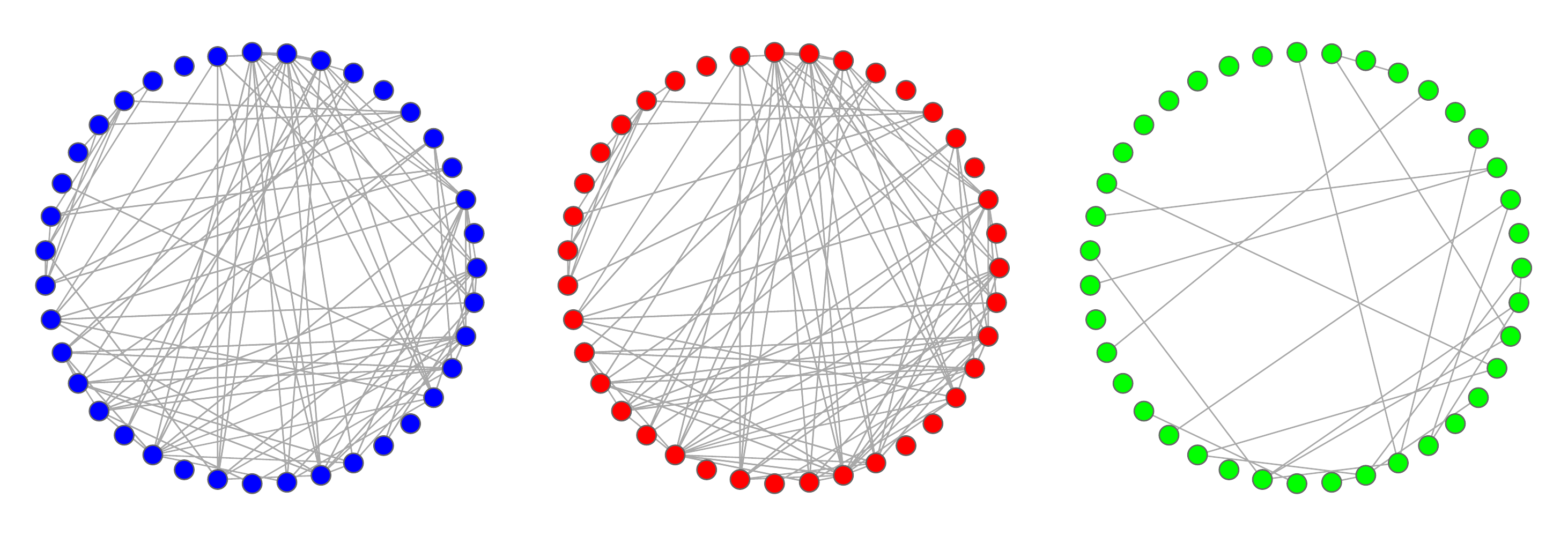}
  \caption{Data adjusted for the first principal component and no target is used. The 100 edges chosen by \GL\ (on the left) and \GN\ with $\alpha = 0.5$ (in the middle). Edges, which are only present in one algorithm are displayed on the right. 89 edges are in common. %
    \label{realdata2}
  }
\end{figure}
Instead of adjusting for the first principal component, we use in the third example the \textit{Maximal Single Correlation} target from subsection \ref{sec:target_types}. As shown on the green right plot of Figure~\ref{realdata3}, in this case even fewer edges are in common.
\begin{figure}[H]
  \includegraphics[width=1\textwidth]{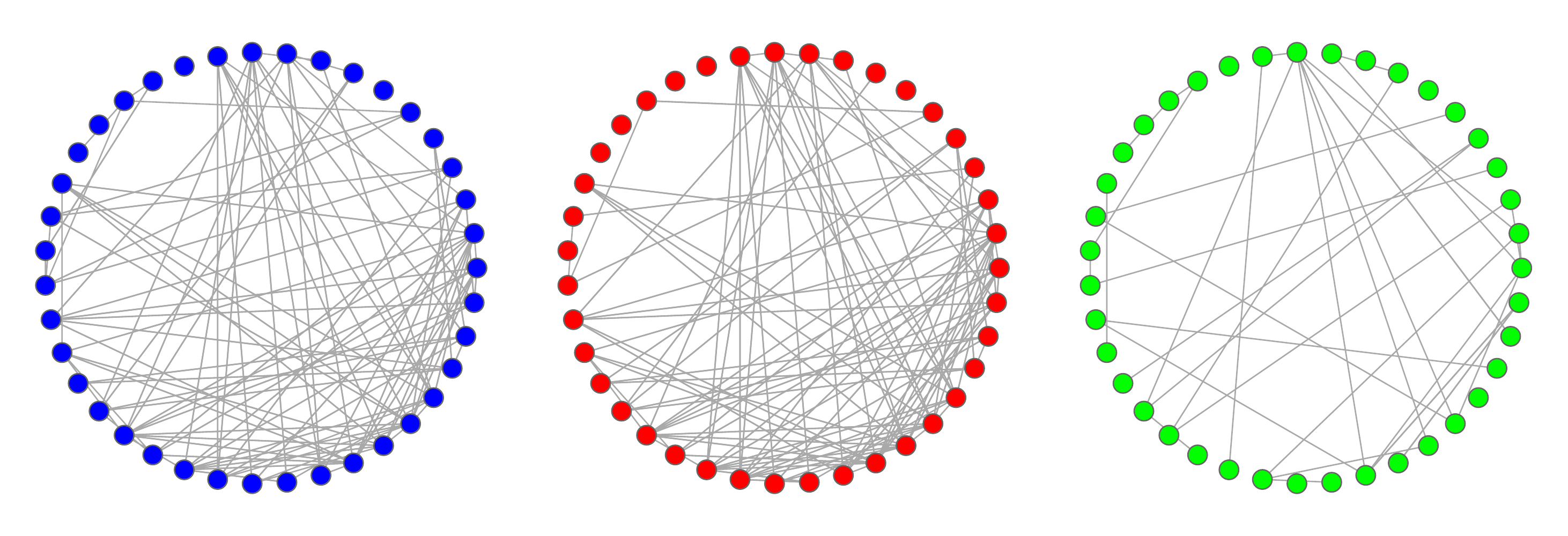}
  \caption{Original data and the \textit{Maximal Single Correlation} target is used. The 100 edges chosen by \GL\ (on the left) and \GN\  with $\alpha = 0.5$ (in the middle). Edges, which are only present in one algorithm are displayed on the right. 83 edges are in common.}
\label{realdata3}
\end{figure}

%%%%%%%%%%%%%%%%%%%%%%%%%%%%%%%%%%%%%%%%%%%%%%%%%%%%%%%%%%%%%%%%%%%%%%%%%%%%%%%%%%%%
%%%%%%%%%%%%%%%%%%%%%%%%%%%%%%%%%%%%%%%%%%%%%%%%%%%%%%%%%%%%%%%%%%%%%%%%%%%%%%%%%%%%
%%%%%%%%%%%%%%%%%%%%%%%%%%%%%%%%%%%%%%%%%%%%%%%%%%%%%%%%%%%%%%%%%%%%%%%%%%%%%%%%%%%%

\subsection{Software: the \texttt{R}-package \oursoftware}
\label{chp:software}
In the following we provide a few examples on how to use our software.

\paragraph{Installing the \oursoftware\ \Rsoftware-package.} This can be done as follows.
\begin{lstlisting}[language=R, 
basicstyle=\small\ttfamily,
    stringstyle=\color{orange},
%otherkeywords={0,1,2,3,4,5,6,7,8,9},
    morekeywords={install_, github, data, gelnet, dpgelnet, glasso, dpglasso, rope,default.target, target, TRUE,FALSE, rags2ridges, ridgeP, crossvalidation},
    deletekeywords={frame,length,as,character, grid},
    keywordstyle=\color{blue},
    commentstyle=\color{olive},
]
# The package can be installed directly from github (using devtools package):
# install.packages(devtools)
devtools::install_github("TobiasRuckstuhl/GLassoElnetFast")
library(GLassoElnetFast)
# We use the first 5 columns of the FHT data from the gcdnet package and lambda = 0.1. 
library(gcdnet); data(FHT); X <- FHT$x[,1:5]; lambda <- 0.1
\end{lstlisting}

\paragraph{Estimation with no target matrices.} We show here the options for the Graphical Lasso, the Graphical Elastic Net and the Rope method (Ridge penalty). 

\begin{lstlisting}[language=R, 
basicstyle=\small\ttfamily,
    stringstyle=\color{orange},
%otherkeywords={0,1,2,3,4,5,6,7,8,9},
    morekeywords={install_, github, data, gelnet, dpgelnet, glasso, dpglasso, rope,default.target, target, TRUE,FALSE, rags2ridges, ridgeP, crossvalidation},
    deletekeywords={frame,length,as,character, grid},
    keywordstyle=\color{blue},
    commentstyle=\color{olive},
]
#######################   Example 1: zero as target matrix   #######################
# GRAPHICAL LASSO penalty (alpha = 1) for the input correlation matrix
fitGlasso  <- gelnet(S = cor(X), lambda = lambda, alpha = 1)   # gelnet algorithm
fitGlasso2 <- dpgelnet(S = cor(X), lambda = lambda, alpha = 1) # dpgelnet algorithm
fitGlasso3 <- glasso::glasso(s = cor(X), rho = lambda)         # classical glasso
fitGlasso4 <- dpglasso::dpglasso(Sigma = cor(X), rho = lambda) # dpglasso 
# up to differences due to stopping criterion, our gelnet and dpgelnet implementation
# give the same results for the precision matrix as the existing glasso and dpglasso:
fitGlasso$Theta; fitGlasso2$Theta; fitGlasso3$wi; fitGlasso4$X # all same

# ELASTIC NET penalty (alpha = 0.5) for the input correlation matrix
fitGelnet  <- gelnet(S = cor(X), lambda = lambda, alpha = 0.5)  # gelnet algorithm
fitGelnet2 <- dpgelnet(S = cor(X), lambda = lambda, alpha = 0.5)# dpgelnet algorithm
fitGelnet$Theta; fitGelnet2$Theta                               # all same

# RIDGE penalty (alpha = 0) for the input correlation matrix
fitROPE1 <- gelnet(S = cor(X), lambda = lambda, alpha = 0)  # ROPE via gelnet		
fitROPE2 <- rope(S = cor(X), lambda = lambda) 	            # ROPE via closed form	
fitROPE3 <- rags2ridges::ridgeP(cor(X), lambda = lambda, target =
                    matrix(0, ncol(X), ncol(X))) # from rags2ridges package
fitROPE1$Theta; fitROPE2; fitROPE3               # all same
\end{lstlisting}

\paragraph{Estimation with the identity target matrix.} We show here the options for the Graphical Lasso, the Graphical Elastic Net and the Rope method (Ridge penalty).

\begin{lstlisting}[language=R, 
basicstyle=\small\ttfamily,
    stringstyle=\color{orange},
%otherkeywords={0,1,2,3,4,5,6,7,8,9},
    morekeywords={install_, github, data, gelnet, dpgelnet, glasso, dpglasso, rope,default.target, target, TRUE,FALSE, rags2ridges, ridgeP, crossvalidation},
    deletekeywords={frame,length,as,character, grid},
    keywordstyle=\color{blue},
    commentstyle=\color{olive},
]

#####################   Example 2: Identity as target matrix    #####################
# GRAPHICAL LASSO penalty (alpha = 1), correlation input, Identity target
fitGlassoId <- gelnet(S = cor(X), lambda = lambda, alpha = 1, 
                            Target = target(Y = X, type = "Identity", cor = T))
# ELASTIC NET penalty (alpha = 0.5), correlation input, Identity target
fitGelnetId <- gelnet(S = cor(X), lambda = lambda, alpha = 0.5, 
                            Target = target(Y = X, type = "Identity", cor = T))
# RIDGE penalty (alpha = 1), correlation input, Identity target; following 3 options 
fitROPEId1  <- gelnet(S = cor(X), lambda = lambda, alpha = 0,
        Target = target(Y = X, type = "Identity", cor = T)) # via gelnet algorithm
fitROPEId2  <- rope(S = cor(X), lambda = lambda, 
        Target = target(Y = X, type = "Identity", cor = T)) # closed form
fitROPEId3  <- rags2ridges::ridgeP(cor(X), lambda = lambda, target =    
        rags2ridges::default.target(S = cor(X), type = "DEPV")) # another closed form
fitGlassoId$Theta; fitGelnetId$Theta; fitROPEId1$Theta; fitROPEId2  # compare fits

\end{lstlisting}

\paragraph{Estimation with the identity target matrix and using cross-validation.} We show here the options for the Graphical Elastic Net.

\begin{lstlisting}[language=R, 
basicstyle=\small\ttfamily,
    stringstyle=\color{orange},
%otherkeywords={0,1,2,3,4,5,6,7,8,9},
    morekeywords={install_, github, data, gelnet, dpgelnet, glasso, dpglasso, rope,default.target, target, TRUE,FALSE, rags2ridges, ridgeP, crossvalidation},
    deletekeywords={frame,length,as,character, grid},
    keywordstyle=\color{blue},
    commentstyle=\color{olive},
]
###############   Example 3: cross-validation with Identity target   ###############
lambda_grid     <- 0.9^c(0:40)
fitGelnetIdCV	<- crossvalidation(Y = X, lambda = lambda_grid, alpha = 0.5, 
                                        cor = T, type = "Identity")
# the optimal lambda obtained and the corresponding fit for the precision matrix is:
fitGelnetIdCV$optimal; fitGelnetIdCV$Theta
gelnet(S = cor(X), lambda = fitGelnetIdCV$optimal, alpha = 0.5, Target = 
    target(Y = X, type = "Identity", cor = T))$Theta  # same fit with optimal lambda
    
    
\end{lstlisting}

\paragraph{Further remarks about the \textsf{gelnet} and \textsf{dpgelnet} functions.} They include the following: 
\begin{itemize}
\item Instead of a scalar $\lambda$, one can also provide entry-wise penalties via the argument \textsf{lambda} (which needs to be a vector or a matrix in this case).
\item Besides the estimated precision matrix (\textsf{Theta}), also an estimate of the covariance matrix (\textsf{W}) is returned. This can be accessed by e.g.~\textsf{fitGelnet\$W}.
\item One can choose whether to penalize the diagonal via the \textsf{penalize.diagonal} argument. %
\item One can provide warm starts (via arguments \textsf{Theta} and \textsf{W}).
\item \textsf{niter}, \textsf{del} and \textsf{conv} are also part of the output yielding information about the number of iterations, change in parameter value and convergence (\textsf{TRUE} or \textsf{FALSE}) with the set number of iterations and thresholds. One can set various thresholds (arguments \textsf{outer.thr} and \textsf{inner.thr}) and iteration numbers (arguments \textsf{outer.maxit} and \textsf{inner.maxit}) when aiming for more accurate results or in case algorithms face convergence issues.
\end{itemize}

\noindent Other notable arguments for the \textsf{gelnet} function: 
\begin{itemize}
\item The argument \textsf{zero} allows to specify indices of the precision matrix which are constrained to be zero.
\item The argument \textsf{Target} allows to specify a diagonal target matrix. Target matrices mentioned in this paper are implemented via the function \textsf{target} and some further ones are implemented in the \textsf{default.target} function of the \textsf{rags2ridges} \textsf{R}-package.
\end{itemize}

\noindent Further explanations can be obtained in the following help files:
\begin{lstlisting}[language=R, 
basicstyle=\small\ttfamily,
    stringstyle=\color{orange},
%otherkeywords={0,1,2,3,4,5,6,7,8,9},
%    morekeywords={install_, github, data, gelnet, dpgelnet, glasso, dpglasso, rope,default.target, target, TRUE,FALSE, rags2ridges, ridgeP},
    deletekeywords={ data,frame,length,as,character},
    keywordstyle=\color{blue},
    commentstyle=\color{olive},
]
help(gelnet)		# gelnet algorithm with option for target matrices
help(dpgelnet)          # dpgelnet algorithm
help(target)		# implemented target matrices described in this paper
help(rope)              # closed form solution for ROPE estimator 
help(crossvalidation)	# aiding CV to find optimal tuning parameter
\end{lstlisting}

Our implementations in the \oursoftware\ \Rsoftware-package build upon the \textsf{glasso} \textsf{R}-package \citep{glasso_package} and the \textsf{dpglasso} \textsf{R}-package \cite{dpglasso_package} by suitably modifying them. Additionally, we also utilized a faster implementation from the
\textsf{glassoFast} \Rsoftware-package \citep{glassofast_package, glassofast_paper}. Moreover, whenever possible, we even improve on the original versions. For example, to achieve further speed-ups, we incorporate block diagonal screening \citep{Mazumder2012_connected_components, Witten2011_connected_components} which was missing in the \textsf{dpglasso} and \textsf{glassoFast} packages. Compared to the \textsf{dpglasso} package, our implementation relies even more on \textsf{Fortran}, which is beneficial for its speed, while for the estimation we also allow the $\norm{\cdot}_{1^-}$ penalty (instead of $\norm{\cdot}_1$ only) and entry-wise penalties. 
Besides such technical improvements on existing software (which lead to some computational speed-up as shown in section~\ref{chp:comptime}), the main new features are the additional flexibility of allowing Elastic Net penalties and diagonal target matrices as described earlier. Our implementation is planned to be made available as an \textsf{R}-package on \textsf{CRAN}. 

\section{From \GL\ and \DPGL\ to \GN\ and \DPGN} 
\label{chp:exist modif}
\subsection{\GL\ and \DPGL} \label{sec:glassodpglasso}
In this section we briefly present the \GL\ algorithm by \citet{glasso} as well as the \DPGL\ algorithm by \cite{Mazumder2012_dpglasso} and set up some notation which we will rely on when presenting in the next subsection~\ref{sec:gelnetdpgelnet} the modified versions with the Elastic Net penalties called \GN\ and \DPGN. We closely follow the derivation presented in \citet{Mazumder2012_dpglasso}. Both the \GL\ and the \DPGL\ algorithm seek to minimize the $L_1$-regularized negative log-likelihood over all positive definite precision matrices~$\Ma{\Theta}$:
\begin{equation*}
    \hat{\mathbf{\Theta}}(\lambda, \alpha=1, \textbf{T}=\mathbf{0}) = 
    \underset{\mathbf{\Theta} \succ 0 }{\text{argmin}} \{ -\text{log det} \Ma{\Theta} + \text{tr} (\textbf{S} \Ma{\Theta}) + \lambda \norm{\Ma{\Theta}}_1 \}.
\end{equation*}
We call the function inside the argmin the \textit{Graphical Lasso loss}, which is a special case of \eqref{problem_setup}. We assume the tuning parameter $\lambda \geq 0$ to be a scalar (but as further generalization, element-wise penalties encoded in a matrix are also possible). The positive semi-definite and symmetric matrix \textbf{S} required as the input is usually the empirical covariance or correlation matrix. The algorithms work with the normal equations corresponding to the Graphical Lasso loss:
\begin{equation}\label{Normal: Glasso}
    -\Ma{\Theta}^{-1} +  \textbf{S}  + \lambda  \Ma{\Gamma} =\textbf{0}, \quad  -\textbf{W} +  \textbf{S}  + \lambda  \Ma{\Gamma}=\textbf{0},
\end{equation}
where $\textbf{W} \coloneqq \Ma{\Theta}^{-1}$ is called the working covariance matrix and the matrix $\Ma{\Gamma}$ denotes the component-wise signs of $\Ma{\Theta}$ with $\Ma{\Gamma}_{i,j} \in [-1,1]$ for $\Ma{\Theta}_{i,j}=0$ . The approach used for solving the normal equations is to update one dimension (row and column) at a time while leaving the remaining ones fixed. The update requires solving a suitable penalized regression problem which can be performed efficiently by applying coordinate descent \citep{coordinatewise}. After updating, one proceeds with the next dimension (row and column), until all dimensions have been updated once. Typically one repeats these update cycles for all the variables as long as differences are above a certain threshold used as a stopping criterion.
\begin{notation}
Partition $\textbf{S}, \Ma{\Theta}, \textbf{W} \text{ and } \Ma{\Gamma}$ into block form  as follows: a matrix with dimensions $(p-1) \times (p-1)$ (e.g. $\textbf{S}_{11}$), two vectors of length $(p-1)$ (e.g. $\textbf{s}_{12}$) and a scalar (e.g. $s_{22}$):
\begin{equation}\label{Block}
    \textbf{S}= \begin{pmatrix}
    \textbf{S}_{11} & \textbf{s}_{12} \\
    \textbf{s}_{21} & s_{22} 
  \end{pmatrix},
  \Ma{\Theta}= \begin{pmatrix}
    \Ma{\Theta}_{11} & \Bo{\theta}_{12} \\
    \Bo{\theta}_{21} & \theta_{22} 
  \end{pmatrix},
  \textbf{W}= \begin{pmatrix}
    \textbf{W}_{11} & \textbf{w}_{12} \\
    \textbf{w}_{21} & w_{22} 
  \end{pmatrix},
  \Ma{\Gamma}= \begin{pmatrix}
    \Ma{\Gamma}_{11} & \Bo{\gamma}_{12} \\
    \Bo{\gamma}_{21} & \gamma_{22}
  \end{pmatrix}.
\end{equation}
\end{notation} 
The main difference of \GL\ and \DPGL\ lies in the fact that \GL\ is actively working with $\textbf{W}$, while \DPGL\ works with its inverse $\Ma{\Theta}$. Furthermore, \GL\ and \DPGL\ use different block-matrix representations for $\textbf{W}$, where properties of inverses of block-partitioned matrices are used and $\Ma{\Theta}=\Ma{W}^{-1}$:
\begin{align}
    \Ma{W}_{Glasso} &=
  \begin{pmatrix}
    \textbf{W}_{11} & \textbf{w}_{12} \\
    \textbf{w}_{21} & w_{22} 
  \end{pmatrix}
  =
  \begin{pmatrix}
    (\Ma{\Theta}_{11}-\frac{\Bo{\theta}_{12}\Bo{\theta}_{21}}{\theta_{22}})^{-1} &
    - \Ma{W}_{11} \frac{\Bo{\theta}_{12}}{\theta_{22}}\\
    . &
    \frac{1}{\theta_{22}}+\frac{\Bo{\theta}_{21}\Ma{W}_{11}\Bo{\theta}_{12}}{\theta^2_{22}}
  \end{pmatrix} \label{WGlasso} ,\\
    \Ma{W}_{DPGlasso} &=
    \begin{pmatrix}
    \textbf{W}_{11} & \textbf{w}_{12} \\
    \textbf{w}_{21} & w_{22} 
     \end{pmatrix}
    =
    \begin{pmatrix}
    \Ma{\Theta}^{-1}_{11}-\frac{\Ma{\Theta}^{-1}_{11}\Bo{\theta}_{12}\Bo{\theta}_{21}\Ma{\Theta}^{-1}_{11}}{\theta_{22}-\Bo{\theta}_{21}\Ma{\Theta}^{-1}_{11}\Bo{\theta}_{12}} &
    - \frac{\Ma{\Theta}^{-1}_{11}\Bo{\theta}_{12}}{\theta_{22}-\Bo{\theta}_{21}\Ma{\Theta}^{-1}_{11}\Bo{\theta}_{12}}\\
    . &
    \frac{1}{\theta_{22}-\Bo{\theta}_{21}\Ma{\Theta}^{-1}_{11}\Bo{\theta}_{12}}
  \end{pmatrix} \label{WDPGlasso}.
\end{align}
As the remaining steps of the derivation (e.g. the regression problems involved in the row and column updates as well as the coordinate descent procedure to obtain solutions for it) are special cases of what is discussed in the next subsection~\ref{sec:gelnetdpgelnet}, we only present the final algorithms here. Note that in Algorithm~\ref{alg:algorithm1} (and Algorithm~\ref{alg:algorithm2}) \textbf{I} denotes the identity matrix (of dimension $p\times p$).

\begin{algorithm}[H]
    \caption{\GL\ algorithm}
  \begin{algorithmic}[1]
    \State Initialize $\textbf{W}_{\text{init}}= \textbf{S} + \lambda\textbf{I}$.
    \State Cycle around the columns repeatedly, performing the following steps till convergence:
    \begin{algsubstates}
        \State Rearrange the rows/columns so that the currently updated column is last (implicitly).
        \State Solve the following Lasso problem with coordinate descent to get $\hat{\Bo{\beta}}$:
        \begin{equation*}
            \hat{\Bo{\beta}}= \underset{\Bo{\beta} \in \mathbb{R}^{p-1}}{\text{argmin }} \{ \tfrac{1}{2} \Bo{\beta}^T \textbf{W}_{11} \Bo{\beta} + \Bo{\beta}^T \textbf{s}_{12} + \lambda \norm{\Bo{\beta}}_1 \}  .
        \end{equation*}
        As warm start for $\Bo{\beta}$ use the solution from the previous round for this row/column.
        \State Update the off-diagonal of the working covariance matrix as  $\hat{\Ma{w}}_{12} = \textbf{W}_{11} \hat{\Bo{\beta}}$ (and similarly for $\hat{\Ma{w}}_{21}$), but do not change the diagonal entry $w_{22}$.
        \State Save $\hat{\Bo{\beta}}$ for this row/column in a matrix $\Ma{B}$. 
    \end{algsubstates}
    \State Finally, for every row/column, compute the diagonal entries of $\Ma{\Theta}$ using $\hat{\theta}_{22} = \frac{1}{w_{22}-\hat{\Bo{\beta}} \hat{\Ma{w}}_{12} }$ and obtain the off-diagonal entries of $\Ma{\Theta}$ from the matrix $\Ma{B}$, where $\hat{\Bo{\theta}}_{12}=  - \hat{\theta}_{22} \hat{\Bo{\beta}}$ and $\hat{\Bo{\theta}}_{21}= \hat{\Bo{\theta}}_{12}^T$.
  \end{algorithmic} 
  \label{alg:algorithm1}
\end{algorithm}
\begin{algorithm}[H]
  \caption{\DPGL\ algorithm}
  \begin{algorithmic}[1]
    \State Initialize $\Ma{\Theta}_{\text{init}}=\text{diag}(\textbf{S}+ \lambda \textbf{I})^{-1}$ and $\textbf{W}_{\text{init}}= \textbf{S} + \lambda\textbf{I}$.
    \State Cycle around the columns repeatedly, performing the following steps till convergence:
    \begin{algsubstates}
        \State Rearrange the rows/columns so that the currently updated column is last (implicitly).
        \State Solve the following quadratic program with coordinate descent for $\Bo{\gamma} \in \mathbb{R}^{p-1}$:
        \begin{equation*}
           \hat{\Bo{\gamma}} = \underset{  \norm{\Bo{\gamma}}_{\infty} \leq \lambda   }{\text{argmin }} \{ \tfrac{1}{2} (\textbf{s}_{12}+ \Bo{\gamma})^T \Ma{\Theta}_{11} (\textbf{s}_{12}+ \Bo{\gamma}) \} .
        \end{equation*}
        \State Update $\hat{\Bo{\theta}}_{12} = \tfrac{- \Ma{\Theta}_{11} (\textbf{s}_{12}+ \hat{\Bo{\gamma}}) }{w_{22}}$ and $\hat{\Bo{\theta}}_{21} = \hat{\Bo{\theta}}_{12}^T$.
        \State Update $\hat{\theta}_{22} = \tfrac{1- (\textbf{s}_{12}+\Ma{\hat{\gamma}})^T \hat{\Bo{\theta}}_{12}}{w_{22}}$.
        \State Update $\Ma{\hat{w}}_{12}  = \textbf{s}_{12}+ {\hat{\Bo{\gamma}}}$ and $\Ma{\hat{w}}_{21} = \Ma{\hat{w}}_{12}^T$, but do not change the diagonal entry $w_{22}$.
    \end{algsubstates}
  \end{algorithmic} 
  \label{alg:algorithm2}
\end{algorithm}
\vspace{1cm}
If the primary interest lies in the estimation of $\Ma{\Theta}$, the \DPGL\ algorithm offers several advantages. It yields both a sparse and positive definite $\Ma{\Theta}$, while \GL\ only ensures a sparse solution, which is not necessarily positive definite in all cycles of the algorithm and thus rarely also when stopping. Moreover, \DPGL\ converges with all positive definite warm starts $\Ma{\Theta}_{w}$. \GL\ is guaranteed to maintain positive definite updates in each step only if the warm start $\Ma{W}_w$ fulfills the conditions $\Ma{W}_w \succ 0$ and $\norm{\Ma{W}_w - \Ma{S}}_{\infty} \leq \lambda$ (which is the case for example for the standard initialization $\textbf{W}_{\text{init}}= \textbf{S} + \lambda\textbf{I}$). For detailed results and differences see \cite{Mazumder2012_dpglasso}.

\subsection{\GN\ and \DPGN} \label{sec:gelnetdpgelnet}
In this section the modified versions of the \GL\ and \DPGL\ algorithms, named \GN\ and \DPGN\ are derived. The specific implementations for \GN\ and \DPGN\ utilize the shell of the \GL\ code provided by \cite{glasso_package} and the \DPGL\ code provided by \cite{dpglasso_package}, see section~\ref{chp:software}. Penalizing the negative log-likelihood with a combination of $L_1$ and $L_2$ terms leads to the following special case of problem \eqref{problem_setup}:
\begin{equation}\label{Elastic Net}
    \hat{\mathbf{\Theta}}(\lambda, \alpha, \textbf{T}=\mathbf{0}) = 
    \underset{\mathbf{\Theta} \succ 0 }{\text{argmin}}  \{ -\text{log det} \Ma{\Theta} + \text{tr} (\textbf{S} \Ma{\Theta}) + \lambda (\alpha \norm{\Ma{\Theta}}_1+\tfrac{1-\alpha}{2} \norm{\Ma{\Theta}}_2^2)        \} ,
\end{equation}
where $\lambda$ is a non-negative tuning parameter, $\alpha \in [0,1]$ is another tuning parameter, % 
and $\Ma{S}$ is a positive semi-definite matrix (typically the covariance or correlation matrix). %
Minimizing expression \eqref{Elastic Net} is called the \textit{Graphical Elastic Net} or in short the \GN\ problem. The corresponding normal equations are
\begin{equation}\label{Norm: Elastic Net}
    -\Ma{\Theta}^{-1} +  \textbf{S}  + \lambda \alpha \Ma{\Gamma} + \lambda (1-\alpha) \Ma{\Theta}=\textbf{0}, \quad -\Ma{W} +  \textbf{S}  + \lambda \alpha \Ma{\Gamma} + \lambda (1-\alpha) \Ma{\Theta}=\textbf{0} ,
\end{equation}
where $\Ma{\Gamma}$ is the sign matrix of $\Ma{\Theta}$ and $\Ma{W}$ is the working covariance matrix similar to previous notation from \eqref{Normal: Glasso}. The \GN\ and \DPGN\ algorithms follow the block coordinate descent approach of the \GL\ and \DPGL\ algorithms in solving the normal equations \eqref{Norm: Elastic Net}. Namely, update one row/column at the time while leaving the remaining ones fixed. The update requires solving another optimization problem (for a regression problem), which can be performed efficiently by applying coordinate descent \citep{coordinatewise}. After updating, one proceeds with the next row/column update, until all of them have been updated once. Typically one repeats these update cycles for all the variables as long as differences are above the threshold used as the stopping criterion. Both algorithms work actively with $\Ma{\Theta}$ and $\Ma{W}$ simultaneously, in contrast to \GL\ and \DPGL. The detailed derivation of \GN\ and \DPGN\ can be split into two problems:
\begin{itemize}
    \item Problem 1) Derive the formula for the row/column updates as well as the therein involved optimization problem.
    \item Problem 2) Apply coordinate descent for the optimization from Problem 1).
\end{itemize}
Coordinate descent \citep{Tseng_coordinate, coordinatewise} performs componentwise updates leaving the other coordinates fixed. If the function to be minimized is convex and can be decomposed into a differentiable, convex function plus a sum of convex functions of each individual parameter, then iteratively updating each coordinate is guaranteed to converge to the global minimizer. For more details, see \citet{Tseng_coordinate, coordinatewise} or \cite{SLS_Hastie_etal}. The functions to be minimized at each of the row/column updates of \GN\ and \DPGN\ fulfill this property and hence only the componentwise updates have to be derived.

\subsubsection{\GN}

\paragraph{Solution to Problem 1)} 
Consider the $p$-th row of the normal equations in \eqref{Norm: Elastic Net} without the diagonal entry, using notation from \eqref{Block}:
\begin{equation*}
    -\textbf{w}_{12} +  \textbf{s}_{12}+ \lambda \alpha \Bo{\gamma}_{12} + \lambda (1-\alpha) \Bo{\theta}_{12}=\textbf{0} .
\end{equation*}
Use \eqref{WGlasso} for $\textbf{w}_{12}$ and define $\Bo{\beta} \coloneqq -\tfrac{\Bo{\theta}_{12}}{\theta_{22}}$:
\begin{equation}
    \begin{gathered}
     \Ma{W}_{11}\tfrac{\Bo{\theta_{12}}}{\theta_{22}}+ \textbf{s}_{12}+ \lambda \alpha \Bo{\gamma}_{12} + \lambda (1-\alpha) \Bo{\theta}_{12} =\textbf{0}, \\
       \Ma{W}_{11} \Bo{\beta} - \textbf{s}_{12} - \lambda \alpha \Bo{\gamma}_{12} + \lambda (1-\alpha) \theta_{22} \Bo{\beta} =\textbf{0} .
           \end{gathered} \label{Normal1}
\end{equation}
Note that $\Bo{\gamma}_{12} \in -\text{sign}(\Bo{\beta})$ since $\theta_{22} > 0$. Therefore \eqref{Normal1} corresponds to the normal equation of the following $L_1$ and $L_2$-regularized quadratic program:
    \begin{equation*}
            \underset{\Bo{\beta} \in \mathbb{R}^{p-1}}{\text{minimize }} \{ \tfrac{1}{2} \Bo{\beta}^T \textbf{W}_{11} \Bo{\beta} - \Bo{\beta}^T \textbf{s}_{12} + \lambda \alpha \norm{\Bo{\beta}}_1 + \lambda \tfrac{1-\alpha}{2} \theta_{22} \Bo{\beta}^T \Bo{\beta} \} ,
    \end{equation*}
or equivalently:
   \begin{equation}\label{QP2}
            \underset{\Bo{\beta} \in \mathbb{R}^{p-1}}{\text{minimize }} \{ \tfrac{1}{2} \norm{ \textbf{W}_{11}^{1/2} \Bo{\beta} - \textbf{W}_{11}^{-1/2}  \textbf{s}_{12} }_2^2 + \lambda \alpha \norm{\Bo{\beta}}_1 + \lambda \tfrac{1-\alpha}{2} \theta_{22} \norm{\Bo{\beta}}_2^2 \}  .
    \end{equation}
After solving the quadratic program (see Problem 2) below), with minimizer $\hat{\Bo{\beta}}$, update the entries as follows:
\begin{itemize}
    \item $\hat{\textbf{w}}_{12}= - \Ma{W}_{11} \frac{\Bo{\theta}_{12}}{\theta_{22}} =  \Ma{W}_{11} \hat{\Bo{\beta}}$ using the representation \eqref{WGlasso} and $\hat{\Bo{\beta}} =-\tfrac{\Bo{\theta}_{12}}{\theta_{22}}$
    \item $\hat{\theta}_{22} = \frac{1}{w_{22}-\hat{\Bo{\beta}} \hat{\Ma{w}}_{12} }$ by using \eqref{WGlasso}, $\hat{\Bo{\beta}} =-\tfrac{\Bo{\theta}_{12}}{\theta_{22}}$ and $\hat{\textbf{w}}_{12} =  \Ma{W}_{11} \hat{\Bo{\beta}}$
    \item $\hat{\Bo{\theta}}_{12}  =-\hat{\theta}_{22} \hat{\Bo{\beta}}$
     \item $\hat{w}_{22}$ with the normal equations \eqref{Norm: Elastic Net}
\end{itemize}

\paragraph{Solution to Problem 2)}  We show here how to apply componentwise updates for solving the $L_1$ and $L_2$-regularized quadratic program \eqref{QP2}. 
Using the notations $\Ma{Z}= \Ma{W}_{11}^{1/2}$, $\Ma{y}=\Ma{W}_{11}^{-1/2} \Ma{s}_{12}$, $\lambda_1=\lambda \alpha$ and $\lambda_2=\lambda (1-\alpha) \theta_{22}$ the quadratic program translates to a standard Elastic Net regression problem, i.e.,
   \begin{equation*}
      R(\Bo{\beta}) \coloneqq  \tfrac{1}{2} \norm{ \Ma{y} - \textbf{Z} \Bo{\beta} }_2^2 + \lambda_1 \norm{\Bo{\beta}}_1 + \tfrac{\lambda_2}{2} \norm{\Bo{\beta}}_2^2  
    \end{equation*}
needs to be minimized over $\Bo{\beta} \in \mathbb{R}^{p-1}$.     
Let $\Tilde{\beta}_k$ for $k \neq j$ be estimates and partially optimize $R(\Bo{\beta})$ with respect to $\beta_j$, by computing the gradient at $\beta_j=\Tilde{\beta_j}$. The gradient only exists if $\Tilde{\beta}_j \neq 0$. Without loss of generality assume that $\Tilde{\beta}_j > 0$. Then:
\begin{equation*}
    \frac{\partial R}{\partial \beta_j} \bigg|_{\Bo{\beta}=\Tilde{\Bo{\beta}}} =  - \sum_{i=1}^{p-1}  Z_{i,j}(y_i - Z_{i}^T \Tilde{\Bo{\beta}}) + \lambda_1 + \lambda_2 \Tilde{\beta}_j .
\end{equation*}
With the partial residuals given as $r_i^{(j)} = y_i - \sum_{k \neq j} Z_{ik} \Tilde{\beta}_k$ and setting the partial derivative to 0 yields:
\begin{equation*}
    \Tilde{\beta}_j = \frac{\sum_{i=1}^{p-1} Z_{i,j} r_i^{(j)}- \lambda_1 }{ \sum_{i=1}^{p-1} Z_{i,j}^2 + \lambda_2} .
\end{equation*}
A similar derivation can be done for $\Tilde{\beta}_j < 0 $. The case $\Tilde{\beta}_j=0$ is treated separately using standard sub-differential calculus. The overall solution for $\Tilde{\beta}_j$ satisfies:
\begin{equation}\label{QP: Updates}
    \Tilde{\beta}_j = \frac{ S_{\lambda_1} \left( \sum_{i=1}^{p-1} Z_{i,j} r_i^{(j)} \right) }{ \sum_{i=1}^{p-1} Z_{i,j}^2 + \lambda_2} ,
\end{equation}
where $S_{\lambda}(x) = \text{sign}(x)(\abs{x}- \lambda)_{+}$ is the soft-thresholding operator. Using the inner products $\Ma{Z}^T \Ma{Z} = \Ma{W}_{11}$ and $\Ma{Z}^T \Ma{y} = \Ma{s}_{12}$, equation \eqref{QP: Updates}
can be written as.
\begin{equation*}
    \Tilde{\beta}_j = \frac{  S_{\lambda_1} \left(  (\Ma{s}_{12})_j -\sum_{k \neq j} (\Ma{W}_{11})_{k,j} \Tilde{\beta}_k  \right) }{  (\Ma{W}_{11})_{j,j} + \lambda_2} .
\end{equation*}
Putting all these pieces from Problem 1) and Problem 2) together yields the \GN\ algorithm (Algorithm \ref{alg:algorithm3} below).

\begin{algorithm}[H]
    \caption{\GN\ algorithm}
  \begin{algorithmic}[1]
    \State Initialize $\Ma{\Theta}_{\text{init}}=\text{diag}(\textbf{S}+\lambda \alpha \Ma{I})^{-1}$ and $\textbf{W}_{\text{init}}= \textbf{S} + \lambda \alpha \Ma{I}  + \lambda (1- \alpha)\Ma{\Theta}_{\text{init}}$.
    \State Cycle around the columns repeatedly, performing the following steps till convergence:
\begin{algsubstates}
        \State Rearrange the rows/columns so that the currently updated column is last (implicitly).
        \State Solve the Elastic Net regression problem \eqref{QP2} with coordinate descent to get $\hat{\Bo{\beta}}$. As warm start for $\Bo{\beta}$ use the solution from the previous round for this row/column.
        \State Update $\hat{\Ma{w}}_{12} = \textbf{W}_{11} \hat{\Bo{\beta}}$, $\hat{\Ma{w}}_{21}=\hat{\Ma{w}}_{12}^T$.
        \State Update  $\hat{\theta}_{22} = \frac{1}{w_{22}-\hat{\Bo{\beta}} \hat{\Ma{w}}_{12} }$.
        \State Update $\hat{\Bo{\theta}}_{12}=- \hat{\theta}_{22} \hat{\Bo{\beta}}$, $\hat{\Bo{\theta}}_{21}=\hat{\Bo{\theta}}_{12}^T$.
        \State Update  $\hat{w}_{22}= \text{s}_{22}+\lambda \alpha+ \lambda (1-\alpha) \hat{\theta}_{22}$.
    \end{algsubstates}
  \end{algorithmic} 
  \label{alg:algorithm3}
\end{algorithm}

\subsubsection{\DPGN}
\paragraph{Solution to Problem 1)} 
Consider the $p$-th row of the normal equations in  \eqref{Norm: Elastic Net} without the diagonal entry using notation from \eqref{Block}:
\begin{equation*}
    -\textbf{w}_{12} +  \textbf{s}_{12}+ \lambda \alpha \Bo{\gamma}_{12} + \lambda (1-\alpha) \Bo{\theta}_{12}=\textbf{0} .
\end{equation*}
Use \eqref{WDPGlasso} for $\textbf{w}_{12}$ and $w_{22}$. Then multiply with $\Ma{\Theta}_{11}$ from the left to obtain
\begin{equation}
    \begin{gathered}
    \begin{aligned}
    \tfrac{\Ma{\Theta}^{-1}_{11}\Bo{\theta}_{12}}{\theta_{22}-\Bo{\theta}_{21}\Ma{\Theta}^{-1}_{11}\Bo{\theta}_{12}} +  \textbf{s}_{12}+ \lambda \alpha \Bo{\gamma}_{12} + \lambda (1-\alpha) \Bo{\theta}_{12} =\textbf{0}, \\
    \Ma{\Theta}^{-1}_{11} w_{22} \Bo{\theta}_{12} +  \textbf{s}_{12}+ \lambda \alpha \Bo{\gamma}_{12} + \lambda (1-\alpha) \Bo{\theta}_{12} =\textbf{0}, \\
     w_{22} \Bo{\theta}_{12} +  \Ma{\Theta}_{11}(\textbf{s}_{12}+ \lambda \alpha \Bo{\gamma}_{12} + \lambda (1-\alpha) \Bo{\theta}_{12}) =\textbf{0}.
     \end{aligned}
    \end{gathered}\label{Normal2}
\end{equation}
Consider the case with $\alpha \in (0,1]$. Define $\Tilde{\Bo{\gamma}} \coloneqq  \lambda \alpha \Bo{\gamma}_{12}$ as well as $\Tilde{\Ma{q}}_{12} \coloneqq \text{abs}(\Bo{\theta}_{12})$ and get the following Karush-Kuhn-Tucker (KKT) conditions:
\begin{equation}
    \begin{gathered}
    \tfrac{w_{22}}{\lambda \alpha} \Tilde{\Bo{q}}_{12} * \Tilde{\Bo{\gamma}} + \Ma{\Theta}_{11}(\textbf{s}_{12}+(1+\tfrac{1-\alpha}{\alpha}\Tilde{\Bo{q}}_{12}) * \Tilde{\Bo{\gamma}}) = \textbf{0},  \\ 
   \Tilde{\Bo{q}}_{12} * (\text{abs}(\Tilde{\Bo{\gamma}}) - \lambda \alpha 1_{p-1})= \textbf{0}, \\
   \norm{\Tilde{\Bo{\gamma}}}_{\infty} \leq \lambda \alpha, 
    \end{gathered} \label{Normal Row2} 
\end{equation}
where $*$ denotes elementwise multiplication. These are equivalent to the following box-constrained quadratic program for $\Bo{\gamma} \in \mathbb{R}^{p-1}$ :
\begin{equation} \label{QP3}
    \underset{\norm{\Bo{\gamma}}_{\infty} \leq \lambda \alpha }{\text{minimize }} \{ \tfrac{1}{2}(\textbf{s}_{12}+(1+\tfrac{1-\alpha}{\alpha}\Tilde{\Bo{q}}_{12}) * \Bo{\gamma})^T \Ma{\Theta}_{11} (\textbf{s}_{12}+(1+\tfrac{1-\alpha}{\alpha}\Tilde{\Bo{q}}_{12}) * \Bo{\gamma}) \} .\\
\end{equation}
In the special case where $\alpha=0$, the normal equations \eqref{Normal2} simplify to:
\begin{equation*}
    w_{22} \Bo{\theta}_{12} +  \Ma{\Theta}_{11}(\textbf{s}_{12} + \lambda \Bo{\theta}_{12})=\textbf{0} .
\end{equation*}
Define $\Tilde{\Bo{q}}_{12}=\text{abs}(\Bo{\theta}_{12})$ to get the following Karush-Kuhn-Tucker (KKT) conditions:
\begin{gather*}
    w_{22} \Tilde{\Bo{q}}_{12} * \Bo{\gamma} + \Ma{\Theta}_{11}(\textbf{s}_{12}+\lambda \Tilde{\Bo{q}}_{12} * \Bo{\gamma}) = \textbf{0}, \\
   \Tilde{\Bo{q}}_{12} * (\text{abs}(\Bo{\gamma}) - 1_{p-1})= \textbf{0}, \\
   \norm{\Bo{\gamma}}_{\infty} \leq 1.
\end{gather*}
These are equivalent to the following box-constrained quadratic program for $\Bo{\gamma} \in \mathbb{R}^{p-1}$:
\begin{equation*} \label{QP4}
   \underset{ \norm{\Bo{\gamma}}_{\infty} \leq 1 }{\text{minimize }}  \tfrac{1}{2}(\textbf{s}_{12}+\lambda \Tilde{\Bo{q}}_{12}* \Bo{\gamma})^T \Ma{\Theta}_{11} (\textbf{s}_{12}+\lambda \Tilde{\Bo{q}}_{12}* \Bo{\gamma}) .
\end{equation*}
After solving the quadratic program (see Problem 2) below) with solution $\Bo{\gamma}^*$, update for $\alpha \in (0,1]$  in the following way (and with very similar updates for the case $\alpha=0$): 
\begin{itemize}
    \item $\hat{\Bo{\theta}}_{12} = \tfrac{- \Ma{\Theta}_{11} (\textbf{s}_{12}+(1+\tfrac{1-\alpha}{\alpha}\Tilde{\Bo{q}}_{12}) * \Bo{\gamma}^*)}{w_{22}}$ using \eqref{Normal Row2} and that $\hat{\Bo{\theta}}_{12}= \Tilde{\Bo{q}}_{12} * \Tilde{\Bo{\gamma}} $
    \item $\hat{\theta}_{22} = w_{22} + \hat{\Bo{\theta}}_{12}^T  \Bo{\Theta}^{-1} \hat{\Bo{\theta}}_{12}$ by \eqref{WDPGlasso} and then use $\hat{\Bo{\theta}}_{12}$ to get $\hat{\theta}_{22} = \tfrac{1- (\textbf{s}_{12}+(1+\tfrac{1-\alpha}{\alpha}\Tilde{\Bo{q}}_{12}) * \Bo{\gamma}^*)^T\hat{\Bo{\theta}}_{12}}{w_{22}}$ 
    \item $\hat{\textbf{w}}_{12}$ and $\hat{w}_{22}$ with the normal equations \eqref{Norm: Elastic Net}
\end{itemize}
\paragraph{Solution to Problem 2)}  We show here how to apply componentwise updates for solving the box constrained quadratic program  \eqref{QP3}. Taking the derivative of the $j$-th coordinate with respect to $\textbf{c}$ for the quadratic program of the form $\tfrac{1}{2}(\textbf{a} + \textbf{b} * \textbf{c})^T \textbf{D}(\textbf{a} + \textbf{b} * \textbf{c})$ leads to:
\begin{equation*}
    \begin{split}
    [\tfrac{1}{2}(\textbf{a} + \textbf{b} * \textbf{c})^T \textbf{D}(\textbf{a} + \textbf{b} * \textbf{c})]^{(j)} &= 
    \tfrac{1}{2}[\textbf{a}^T \textbf{D}\textbf{a}]^{(j)} +
     [(\textbf{b} * \textbf{c})^T\textbf{D}\textbf{a}]^{(j)} +
   \tfrac{1}{2}[(\textbf{b} * \textbf{c})^T \textbf{D}(\textbf{b} * \textbf{c})]^{(j)} \\
    &=
    0 +(\textbf{b} * (\textbf{D}\textbf{a}))_j + (\textbf{b} * (\textbf{D}(\textbf{b} * \textbf{c})))_j \\
    &= (\textbf{b} * (\textbf{D}(\textbf{a}+ \textbf{b} * \textbf{c})))_j \\
    &= b_j \sum_{k=1}^{p} d_{j,k} (a_k + b_k c_k) . 
    \end{split}
\end{equation*}
Setting $b_j \sum_{k=1}^{p}  d_{j,k} (a_k + b_k c_k) =0$ one obtains 
\begin{equation*}
\begin{split}
 d_{j,j} (a_j + b_j c_j) &= \sum_{k=1, k \neq j }^{p} d_{j,k} (a_k + b_k c_k), \\
 c_j &= \frac{-\sum_{k=1}^{p} d_{j,k} (a_k + b_k c_k) + d_{j,j} b_j c_j}{d_{j,j} b_j }.
\end{split}
\end{equation*}
This leads to the following coordinate update of $\gamma_j$
\begin{equation*}
    \gamma_{j,new} = \frac{-\sum_{k=1}^{p} (\Ma{\Theta}_{11})_{j,k} (\textbf{s}_{12}+(1+\tfrac{1-\alpha}{\alpha}\Tilde{\Bo{q}}_{12})_k * \Ma{\gamma}_{k, old}) + (\Ma{\Theta}_{11})_{j,j}(1+\tfrac{1-\alpha}{\alpha}\Tilde{\Bo{q}}_{12})_j * \Ma{\gamma}_{j, old}}{ (\Ma{\Theta}_{11})_{j,j}(1+\tfrac{1-\alpha}{\alpha}\Tilde{\Bo{q}}_{12})_j } .
\end{equation*}
In the special case where $\alpha=0$,
\begin{equation*}
    \gamma_{j,new} = \frac{-\sum_{k=1}^{p} (\Ma{\Theta}_{11})_{j,k} (\textbf{s}_{12}+\lambda (\Tilde{\Bo{q}}_{12})_k * \Ma{\gamma}_{k, old}) + (\Ma{\Theta}_{11})_{j,j}\lambda (\Tilde{\Bo{q}}_{12})_j * \Ma{\gamma}_{j, old}}{ (\Ma{\Theta}_{11})_{j,j}\lambda(\Tilde{\Bo{q}}_{12})_j } .
\end{equation*}
Putting all these pieces from Problems 1) and 2) together yields the \DPGN\ algorithm. Algorithm \ref{alg:algorithm4} below shows the procedure for $\alpha \in (0,1]$ (which changes only slightly for the case $\alpha= 0$).

\begin{algorithm}[H]
  \caption{\DPGN\ algorithm}
  \begin{algorithmic}[1]
    \State Initialize $\Ma{\Theta}_{\text{init}}=\text{diag}(\lambda \alpha \textbf{I} + \textbf{S})^{-1}$ and $\textbf{W}_{\text{init}}= \textbf{S} + \lambda \alpha \textbf{I} + \lambda (1- \alpha)\Ma{\Theta}_{\text{init}}$.
    \State Cycle around the columns repeatedly, performing the following steps till convergence:
    \begin{algsubstates}
        \State Rearrange the rows/columns so that the currently updated column is last (implicitly).
        \State Solve \eqref{QP3} and denote the solution as $\Bo{\gamma}^*$.
        \State Update $\hat{\Bo{\theta}}_{12} = \tfrac{- \Ma{\Theta}_{11} (\textbf{s}_{12}+(1+\tfrac{1-\alpha}{\alpha}\Tilde{\Bo{q}}_{12}) * \Bo{\gamma}^*)}{w_{22}}$.
        \State Update $\hat{\theta}_{22} = \tfrac{1- (\textbf{s}_{12}+(1+\tfrac{1-\alpha}{\alpha}\Tilde{\Bo{q}}_{12}) * \Bo{\gamma}^*)^T\hat{\Bo{\theta}}_{12}}{w_{22}}$.
        \State Update $\hat{\textbf{w}}_{12}  = \textbf{s}_{12}+ \Bo{\gamma}^* + \lambda (1-\alpha) \hat{\Bo{\theta}}_{12}$.
        \State Update $\hat{w}_{22}  = s_{22}+ \lambda \alpha + \lambda (1-\alpha) \hat{\theta}_{22}$.
    \end{algsubstates}
  \end{algorithmic} 
  \label{alg:algorithm4}
\end{algorithm}

\begin{lemma}
\label{Lemma:DPGN}
Suppose $\Ma{\Theta} \succ \Ma{0}$ is used as warm-start for the \DPGN\ algorithm. Then every row/column update of \DPGN\ maintains the positive definiteness of the working precision matrix $\Ma{\Theta}$. Note that a corresponding lemma for the \DPGL\ is proven by \cite{Mazumder2012_dpglasso} and hence, we just provide a simple extension here.
\end{lemma}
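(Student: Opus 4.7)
The plan is to follow the Schur complement strategy that \cite{Mazumder2012_dpglasso} use for the \DPGL\ lemma, and then verify that the extra $L_2$-term appearing in \DPGN\ does not break the underlying algebraic identity. Without loss of generality (by the implicit permutation in substep~(a)) I would assume that the last row/column is the one being updated, so that after the update the working precision matrix has the block form $\bigl(\begin{smallmatrix} \Ma{\Theta}_{11} & \hat{\Bo{\theta}}_{12} \\ \hat{\Bo{\theta}}_{21} & \hat{\theta}_{22} \end{smallmatrix}\bigr)$ with $\Ma{\Theta}_{11}$ untouched. Since the current $\Ma{\Theta} \succ \Ma{0}$, its leading principal submatrix $\Ma{\Theta}_{11}$ is automatically positive definite; by the Schur complement criterion it therefore suffices to prove that $\hat{\theta}_{22} - \hat{\Bo{\theta}}_{12}^T \Ma{\Theta}_{11}^{-1} \hat{\Bo{\theta}}_{12} > 0$.

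The main computation is then short. Abbreviate $\Bo{u} \coloneqq \textbf{s}_{12} + (1 + \tfrac{1-\alpha}{\alpha}\Tilde{\Bo{q}}_{12}) * \Bo{\gamma}^*$ (and analogously $\Bo{u} \coloneqq \textbf{s}_{12} + \lambda \Tilde{\Bo{q}}_{12} * \Bo{\gamma}^*$ in the $\alpha=0$ case), so that substeps~(c) and~(d) of Algorithm~\ref{alg:algorithm4} read $\hat{\Bo{\theta}}_{12} = -\Ma{\Theta}_{11}\Bo{u}/w_{22}$ and $\hat{\theta}_{22} = (1 - \Bo{u}^T \hat{\Bo{\theta}}_{12})/w_{22}$. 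Direct substitution, using the symmetry of $\Ma{\Theta}_{11}$, yields $\hat{\Bo{\theta}}_{12}^T \Ma{\Theta}_{11}^{-1} \hat{\Bo{\theta}}_{12} = \Bo{u}^T \Ma{\Theta}_{11} \Bo{u} / w_{22}^2$ and $\hat{\theta}_{22} = 1/w_{22} + \Bo{u}^T \Ma{\Theta}_{11} \Bo{u} / w_{22}^2$. The two quadratic forms cancel, leaving the Schur complement equal to $1/w_{22}$ — exactly the identity that underlies the \DPGL\ proof.

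It then remains to check that $w_{22}$ stays strictly positive throughout the iterations, which I would argue by induction. At initialization $w_{22}^{\text{init}} = s_{22} + \lambda\alpha + \lambda(1-\alpha)/(s_{22}+\lambda\alpha) > 0$, since $\textbf{S}$ is positive semi-definite and $\lambda \geq 0$, $\alpha \in [0,1]$. Whenever a row/column update begins with $w_{22} > 0$, the preceding calculation forces $\hat{\theta}_{22} > 1/w_{22} > 0$, so that $\hat{w}_{22} = s_{22} + \lambda\alpha + \lambda(1-\alpha)\hat{\theta}_{22}$ from substep~(g) remains positive and the induction closes. The only genuinely new feature compared to \DPGL\ is the componentwise factor $(1 + \tfrac{1-\alpha}{\alpha}\Tilde{\Bo{q}}_{12})$ sitting inside $\Bo{u}$; since it never leaves $\Bo{u}$ during the Schur complement cancellation, I expect the main obstacle to be purely bookkeeping rather than conceptual, and the \DPGL\ argument carries over essentially verbatim.
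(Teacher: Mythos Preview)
Your proposal is correct and follows essentially the same Schur complement argument as the paper: both reduce to showing $\hat{\theta}_{22} - \hat{\Bo{\theta}}_{12}^T \Ma{\Theta}_{11}^{-1} \hat{\Bo{\theta}}_{12} = 1/w_{22}$ by direct substitution of the update formulas, and then observe that $w_{22} = s_{22} + \lambda\alpha + \lambda(1-\alpha)\theta_{22} > 0$. Your explicit induction on the positivity of $w_{22}$ across iterations is slightly more detailed than the paper's one-line justification, but the underlying idea is identical.
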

\begin{proof}
Let $\Ma{A}= \begin{pmatrix}
    \textbf{A}_{11} & \textbf{a}_{12} \\
    \textbf{a}_{21} & a_{22} 
  \end{pmatrix}$. The condition $\Ma{A} \succ \Ma{0}$ is equivalent to: \\
  \begin{equation}\label{rem1}
       \textbf{A}_{11}  \succ \Ma{0} \text{ and } (a_{22} - \textbf{a}_{21} (\textbf{A}_{11})^{-1} \textbf{a}_{12} ) >0 .
  \end{equation}
Consider updating the $p$-th row/column of the precision matrix. Since the block $\Ma{\Theta}_{11}$ remains fixed we only need to show the second condition from \eqref{rem1}. Using the updates of the \DPGN\ (Algorithm \ref{alg:algorithm4}):
\begin{gather*}
\hat{\theta}_{22} - \hat{\Bo{\theta}}_{12}^T (\Ma{\Theta}_{11})^{-1} \hat{\Bo{\theta}}_{12}  \\
=\frac{1- (\textbf{s}_{12}+(1+\tfrac{1-\alpha}{\alpha}\Tilde{\Bo{q}}_{12}) * \Bo{\gamma}^*)^T\hat{\Bo{\theta}}_{12}}{w_{22}}  
- \frac{- (\textbf{s}_{12}+(1+\tfrac{1-\alpha}{\alpha}\Tilde{\Bo{q}}_{12}) * \Bo{\gamma}^*)^T (\Ma{\Theta}_{11})^{-1} \Ma{\Theta}_{11} \hat{\Bo{\theta}}_{12} }{w_{22}} \\
= \frac{1}{w_{22}} = \frac{1}{s_{22} + \lambda \alpha + \lambda (1-\alpha) \theta_{22}} > 0 .  
\end{gather*}
\end{proof}

\subsubsection{Connected components} \label{subsec:connectedcomp}
The problem of solving $p$-dimensional \GL\ (or \DPGL) problems can be reduced in some cases to solving several lower-dimensional problems, enabling massive speed ups for the computations (see \citealp{Mazumder2012_connected_components, Witten2011_connected_components}). A similar result also holds for \GN\ (or \DPGN). Following the notation of \cite{Mazumder2012_connected_components}, define the nodes $\mathcal{V}=\{1,...,p \}$ and the matrix~$\mathcal{E}$: 
\begin{equation*}
    \mathcal{E}_{ij}= \begin{cases} 1 \text{ if } \Ma{\hat\Theta}_{ij}(\lambda, \alpha) \neq 0, i \neq j, \\
    0 \text{ otherwise.}
    \end{cases}
\end{equation*}
Here, $\Ma{\hat\Theta}(\lambda, \alpha)$ is the estimated precision matrix for the input covariance matrix $\textbf{S}$ and the two tuning parameters $\lambda$ and $\alpha$, see equation \eqref{problem_setup}. This defines a symmetric graph $\mathcal{G}=(\mathcal{V}, \mathcal{E})$. Decompose this graph into its connected components, i.e., $\mathcal{G}= \bigcup_{l=1}^{\tilde{L}} \mathcal{G}_{l}$ where $\tilde{L}$ is the number of connected components and $\mathcal{G}_{l}=({\mathcal{V}}_l, \mathcal{E}_l)$ the $l$-th sub-graph. Furthermore, define $E$ as
\begin{equation*}
    E_{ij}= \begin{cases} 1 \text{ if } \abs{\textbf{S}_{ij}} > \lambda\alpha, i \neq j \\
    0 \text{ otherwise.}
    \end{cases}
\end{equation*}
This defines another symmetric graph $G=(\mathcal{V},E)$. Decompose this graph into its connected components as well, i.e., $G= \bigcup_{l=1}^{L} G_{l}$ where $L$ is the number of connected components and $G_{l}=(\mathcal{V}_l, E_l)$.

\begin{theorem}[taken from \citealp{Atchade_Mazumder_Chen}]
\label{thm:connected_components}
Let $\mathcal{G}_{l}=({\mathcal{V}}_l, \mathcal{E}_l), l=1,...,\tilde{L}$ and  $G_{l}=(\mathcal{V}_l, E_l), l=1,...,L$ denote the connected components, as defined above.
Then $\tilde{L}=L$ and there exists a permutation $\Pi$ on $\{1,...,L\}$ such that ${\mathcal{V}}_{\Pi(l)}=\mathcal{V}_l$ and ${\mathcal{E}}_{\Pi(l)}=E_l$ for all $l=1,...,L$.
\end{theorem}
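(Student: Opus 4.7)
The plan is to mirror the classical Graphical Lasso connected-components argument of \cite{Mazumder2012_connected_components, Witten2011_connected_components} and adapt it to the Elastic Net objective \eqref{Elastic Net}. The central tool is the subgradient form of the first-order condition for \eqref{Elastic Net}, namely $-\hat{\Ma{\Theta}}^{-1}+\Ma{S}+\lambda\alpha\hat{\Ma{\Gamma}}+\lambda(1-\alpha)\hat{\Ma{\Theta}}=\Ma{0}$ with $\hat{\Ma{\Gamma}}$ a subgradient of $\norm{\cdot}_1$ at $\hat{\Ma{\Theta}}$.

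The first step would be a key lemma: if $\hat{\Ma{\Theta}}$ is block-diagonal (after a common row/column permutation) with blocks $B_1,\ldots,B_K$, then $\abs{S_{ij}}\le\lambda\alpha$ for every pair $i,j$ lying in different blocks. The argument is short: block-diagonality of $\hat{\Ma{\Theta}}$ transfers to $\hat{\Ma{W}}=\hat{\Ma{\Theta}}^{-1}$, so for an off-block index pair the KKT equation collapses to $S_{ij}+\lambda\alpha\hat{\Gamma}_{ij}=0$, which together with $\abs{\hat{\Gamma}_{ij}}\le 1$ yields the bound. In particular, any block decomposition of $\hat{\Ma{\Theta}}$ must group entire $G$-components together.

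Next I would establish the converse direction, namely that $\hat{\Ma{\Theta}}$ is genuinely block-diagonal along the $G$-components $\mathcal{V}_1,\ldots,\mathcal{V}_L$. Construct a candidate $\tilde{\Ma{\Theta}}$ by solving the Graphical Elastic Net independently on each submatrix $\Ma{S}[\mathcal{V}_l,\mathcal{V}_l]$ and placing the resulting solutions on the diagonal. Within each block the KKT equations hold by construction; off-block they reduce to $S_{ij}+\lambda\alpha\tilde{\Gamma}_{ij}=0$, which admits the valid subgradient choice $\tilde{\Gamma}_{ij}=-S_{ij}/(\lambda\alpha)\in[-1,1]$ exactly because $\abs{S_{ij}}\le\lambda\alpha$ for pairs in different $G$-components. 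Strict convexity of the Graphical Elastic Net objective on $\{\Ma{\Theta}\succ 0\}$---coming from $-\log\det\Ma{\Theta}$, and reinforced by the $L_2$ term when $\alpha<1$---then yields uniqueness, so $\hat{\Ma{\Theta}}=\tilde{\Ma{\Theta}}$, which shows that the $\mathcal{G}$-partition refines the $G$-partition.

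Finally, to upgrade refinement to equality and conclude $\tilde{L}=L$ together with the identification of components via a relabeling permutation $\Pi$, I would argue by contradiction: suppose some $G$-component $\mathcal{V}_l$ splits in $\mathcal{G}$ as $A\sqcup B$ with no $\mathcal{G}$-edges crossing. Then $\hat{\Ma{\Theta}}$ restricted to $\mathcal{V}_l$ is itself block-diagonal with blocks $A$ and $B$, and applying the lemma to this restricted Graphical Elastic Net problem gives $\abs{S_{ij}}\le\lambda\alpha$ for every $i\in A$, $j\in B$. This means the induced subgraph $G|_{\mathcal{V}_l}$ has no edges between $A$ and $B$ and is hence disconnected, contradicting the fact that $\mathcal{V}_l$ is a single $G$-component. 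The most delicate part I anticipate is the KKT bookkeeping in the lemma, since the subgradient $\hat{\Ma{\Gamma}}$ at zero entries is set-valued and must be chosen consistently on both sides of the argument; the degenerate case $\alpha=0$ (where $E$ encodes only exact zeros of $\Ma{S}$ and the solution is the closed-form Ridge estimator) can be handled separately.
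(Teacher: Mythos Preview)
The paper does not actually prove this theorem: it is stated with the parenthetical ``taken from \citealp{Atchade_Mazumder_Chen}'' and no argument is given in the text. So there is no in-paper proof to compare against.

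Your proposal is the standard connected-components argument of \cite{Mazumder2012_connected_components, Witten2011_connected_components}, correctly adapted to the Elastic Net KKT equations \eqref{Norm: Elastic Net}. The key observation---that for an off-block index pair both $\hat{\Ma{W}}_{ij}$ and $\hat{\Ma{\Theta}}_{ij}$ vanish, so the stationarity condition collapses to $S_{ij}+\lambda\alpha\hat{\Gamma}_{ij}=0$ regardless of the quadratic term---is exactly what makes the threshold $\lambda\alpha$ (rather than $\lambda$) the right one, and your three steps (off-block $\Rightarrow$ small $|S_{ij}|$; block-diagonal candidate satisfies KKT; no further splitting of a $G$-component) are the usual ones. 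This is almost certainly the same argument as in the cited reference, so your approach is appropriate and essentially complete; the caveat you flag about $\alpha=0$ is real (the threshold becomes $0$ and the decomposition is only along exact zeros of $\Ma{S}$), and note that the edge-set equality ${\mathcal{E}}_{\Pi(l)}=E_l$ in the statement should be read at the level of connected components rather than as literal equality of edge sets.
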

Determining the connected components based on the thresholded covariances (as defined in $E$) is computationally cheap. Note that parts of $\Ma{\hat\Theta}$ corresponding to the different connected component can be then solved independently, i.e. a suitable permutation of $\Ma{\Theta}$ and $\Ma{W}$ leads to block-diagonal form. The $L$ blocks are exactly of the sizes of the connected components. This result is especially attractive if the maximum size of the connected components is small compared to $p$, since the additional effort to compute the connected components is negligible compared to the gains of reducing the problem into smaller sized problems. For fixed $\alpha \neq 0$ and $\Ma{S}$ the number of connected components is increasing in $\lambda$. If $\lambda \alpha \geq S_{ij}$ for all $i,j \in \mathcal{V}$ with $i \neq j$  then $\Ma{\Theta}$ and $\Ma{W}$ are diagonal matrices. We incorporated the check of such connected components prior to starting actual calculations in our implementation in the \oursoftware\ \Rsoftware-package.

\subsection{Target matrices}
\label{chp:target}
We now turn towards the inclusion of a positive semi-definite (diagonal) target matrix $\textbf{T}$ into the Graphical Elastic Net problem. Some motivation for this is provided by \cite{Wieringen_Peeters} as well as \cite{Kuismin_ROPE} in the case of Ridge type penalization. The target can be interpreted as prior knowledge or an educated guess. We aim to solve problem \eqref{problem_setup} which we recall here:
\begin{equation*}
       \hat{\mathbf{\Theta}}(\lambda, \alpha, \textbf{T}) = 
    \underset{\mathbf{\Theta} \succ 0 }{\text{argmin}}   \{ -\text{log det} \Ma{\Theta} + \text{tr} (\textbf{S} \Ma{\Theta}) + \lambda (\alpha \norm{\Ma{\Theta}- \textbf{T}}_1+\tfrac{1-\alpha}{2} \norm{\Ma{\Theta}-\textbf{T}}_2^2) \} .
\end{equation*}
This optimization problem is difficult to solve efficiently in general. In the $L_1$-penalty case,  \cite{Wieringen_iterative} proposed to iteratively apply his generalized Ridge estimator (with elementwise differing $\lambda$) to approximate the loss function for the $L_1$ case. However, this approach is computationally not attractive, even for moderately sized problems (see section~\ref{chp:comptime} for computational times). We propose to simplify the problem by considering only positive semi-definite \textbf{diagonal} target matrices. While diagonal matrices are less flexible than arbitrary positive semi-definite target matrices, we note that in practice many fall into this category (e.g. all target matrices that were used by \cite{Wieringen_Peeters} and by \cite{Kuismin_ROPE}).

When considering diagonal target matrices with non-negative entries, the normal equations for the diagonal entries are different, while for the other entries they remain the same. For each diagonal entry three cases can occur, leading to the following normal equations:\\
Case 1: $\theta_{22} > t_{22}$,  then
\begin{equation*}
 w_{22} = s_{22} + \lambda \alpha + \lambda (1-\alpha) (\theta_{22}-t_{22})  .
\end{equation*}
Case 2: $\theta_{22} = t_{22}$,  then
\begin{equation*}
 w_{22} = s_{22} + \lambda \alpha u, \text{ where } u \in [-1,1]. 
\end{equation*}
Case 3: $\theta_{22} < t_{22}$,  then
\begin{equation*}
 w_{22} = s_{22} - \lambda \alpha + \lambda (1-\alpha) (\theta_{22}-t_{22}) .
\end{equation*} 
Note that in the traditional setting when zero is the target matrix, then always case~1 occurs. In the general case, however, one cannot automatically update diagonal elements based on case~1. We derive the technical modifications in Appendix~\ref{app:target} on how to modify the \GN\ algorithm. Overall, in contrast to the algorithm without target, the initialization and updating for the diagonal entries require care. We note that our chosen updates work for realistic targets, but occasionally may not converge for targets with very large diagonal entries. However, such targets are not desirable from a statistical perspective.

\subsection{Discussion of the different algorithms}

Our proposed algorithms \GN\ and \DPGN\ are generalizations of \GL\ and \DPGL. For the Lasso case $\alpha = 1$, \GN\ performs the same updates as \GL\ and \DPGN\ performs the same updates as \DPGL. For elastic net penalties, i.e.~$\alpha \in (0,1)$, updates of \GN\ and \DPGN\ are slightly modified to incorporate the additional quadratic penalty term. Lemma~\ref{Lemma:DPGN} ensures that \DPGN\ maintains positive definite precision matrices throughout the algorithm for arbitrary positive definite precision matrices as warm starts. This feature is appealing when a fit is already available, for example from a slightly different $\lambda$ value in cross validation or in change point detection problems from a neighbouring split point (see \citealp{SeedBS} and \citealp{OBS}). \GN\ lacks this feature, and for certain warm starts it may not converge. However, as an advantage for \GN, we have updates that allow to incorporate diagonal target matrices, and hence, is more flexible than \DPGN in its currently presented form. As long as single fits are required, (i.e.~without warm starts), we recommend \GN. If repeated fits relying on warms starts are necessary in some application, one can still try to use \GN\ with warm starts and in case this faces convergence issues, resort to \DPGN.

\section{Simulation results} \label{chp:simulation}

In this chapter the statistical performance of the Graphical Elastic Net estimator is compared to its special cases \GL\ \citep{glasso} and \RP\ \citep{Kuismin_ROPE}. Note that \cite{Wieringen_Peeters} also proposed Ridge-type penalization and they called it the Alternative Ridge Precision estimator (with Type I for the case of zero as target matrix and Type II for nonzero positive definite target matrix). For simplicity, we will use the name \RP\ in the following for such Ridge-type penalization approaches.

\paragraph{Simulation models.}
The simulation setup is similar to that of \cite{Kuismin_ROPE}. We draw $n$ independent realizations from multivariate Gaussian distributions $\mathcal{N}(\Ma{0},\Ma{\Sigma})$, where $\Ma{\Sigma}= [\sigma_{i,j}] \in \mathbb{R}^{p \times p}$ and $\Ma{\Theta}= [\theta_{i,j}]= \Ma{\Sigma}^{-1}$ are positive definite matrices coming from 6 different models.
\begin{itemize}
    \item \textit{Model 1} \textbf{Compound symmetry model}: $\sigma_{i,i}=1$ and $\sigma_{i,j}=0.6^2$ for $i\neq j$.
 \end{itemize}
 Model 2-4 are taken from \cite{Liu_Wang_TIGER}. In these models an adjacency matrix $\mathbf{A}$ is generated from a graph, where each nonzero off-diagonal element is set to 0.3 and the diagonal elements to 0.  Then the smallest eigenvalue $\Lambda_{\text{min}}(\mathbf{A})$ is calculated and the corresponding precision matrix is constructed by
\begin{equation*}
    \mathbf{\Theta} = \mathbf{D} ( \mathbf{A} + ( \abs{ \Lambda_{\text{min}}( \mathbf{A})}+0.2) \cdot \mathbf{I}) \mathbf{D},
\end{equation*}
where $\mathbf{D} \in \mathbb{R}^{p \times p}$ is a diagonal matrix with $D_{i,i}=1$ for $i=1,\dots,\tfrac{p}{2}$ and $D_{i,i}=3$ for $i=\tfrac{p}{2}+1,\dots,p$.
\begin{itemize}
    \item \textit{Model 2} \textbf{Scale-free graph model}: The graph begins with an initial small chain graph of 2 nodes. New nodes are added to the graph one at a time. Each new node is connected to one existing node with a probability proportional to the number of degrees that the existing node already has.
    \item \textit{Model 3} \textbf{Hub graph model}: The $p$ nodes are evenly partitioned into $\tfrac{p}{10}$ disjoint groups with each group containing 10 nodes. Within each group, one node is selected as the hub and edges between the hub and the other 9 nodes are added.
    \item \textit{Model 4} \textbf{Block graph model}: Here $\tilde{\mathbf{\Theta}}$ is directly produced by making a block diagonal matrix with block size $\tfrac{p}{10}$, where the off-diagonal entries are set to 0.5 and diagonal entries to 1. The matrix is then randomly permuted by rows/columns and the resulting covariance matrix is taken as $\mathbf{\Sigma} = \mathbf{D}^{-1} \tilde{\mathbf{\Theta}}^{-1} \mathbf{D}^{-1}$, where this time $D_{i,i}=1$ for $i=1,\dots,\tfrac{p}{2}$ and $D_{i,i}=1.5$ for $i=\tfrac{p}{2}+1,\dots, p$.
\end{itemize}
 Models 5 \& 6 are graphical models coming from \cite{Cai2}. First generate the matrix $\tilde{\mathbf{\Theta}}=[\tilde{\theta}_{i,j}]$ and then multiply with the inverse of a diagonal matrix $\mathbf{D}$ from both sides to get $\mathbf{\Sigma}$. Each diagonal entry of $\mathbf{D}$ is independently generated from a uniform distribution on the interval 1 to 5.
 \begin{itemize}
    \item \textit{Model 5} \textbf{Band graph model}: $\tilde{\theta}_{i,i}=1$, $\tilde{\theta}_{i,i+1}=\tilde{\theta}_{i+1,i}=0.6$, $\tilde{\theta}_{i,i+2}=\tilde{\theta}_{i+2,i}=0.3$, $\tilde{\theta}_{i,j}=0$ for $\abs{i-j}\geq 3$.
    \item \textit{Model 6} \textbf{Erd\H{o}s-R\'enyi random graph model}: Take $\tilde{\tilde{\mathbf{\Theta}}}=[\tilde{\tilde{\theta}}_{i,j}]$, where $\tilde{\tilde{\theta}}_{i,j}=u_{i,j} \cdot \delta_{i,j}$, such that $\delta_{i,j}$ is a Bernoulli random variable with success probability 0.05 and $u_{i,j}$ is a uniform random variable on the interval 0.4 to 0.8. Then take $\tilde{\mathbf{\Theta}}=\tilde{\tilde{\mathbf{\Theta}}} + ( \lvert  \Lambda(\tilde{\tilde{\mathbf{\Theta}}})_{\text{min}} \rvert + 0.05) \cdot \mathbf{I} $.
 \end{itemize}

For all models we transform the covariance matrix to be a correlation matrix before simulating the data. With the exception of Model 1, all models have a sparse structure in $\mathbf{\Theta}$. 
In order to get performance measures for the different methods, 100 independent simulations for each model are performed and the averages for several loss functions are calculated.
Using five-fold cross-validation the optimal tuning parameter $\lambda$ for each method is determined for each simulation run separately.

\paragraph{Performance measures.}
The different measures used in the simulations can be split into two groups. \\
1) Loss functions used by \cite{Kuismin_ROPE}:
\begin{itemize}
    \item \textit{Kullback-Leibler loss:} KL = tr($\Ma{\Sigma} \hat{\Ma{\Theta}})$ - log(det($\Ma{\Sigma} \hat{\Ma{\Theta}})) - p$
    \item \textit{L2 loss:} L2 $= \lVert \Ma{\Theta} - \hat{\Ma{\Theta}} \rVert_F$
    \item \textit{Spectral norm loss:} SP = $d_1$, where $d_1^2$ is the largest eigenvalue of the matrix $(\Ma{\Theta}-\hat{\Ma{\Theta}})^T(\Ma{\Theta}-\hat{\Ma{\Theta}})$
\end{itemize}
2) Graph recovery measures: 
Let $\hat{\Ma{\Theta}}$ be the estimated solution and $\Ma{\Theta}$ the underlying truth. For a given threshold $\epsilon$ define the adjacency matrices $\mathcal{\hat A}$ and $\Ma{A}$:
\begin{equation*}
    \mathcal{\hat A}_{i,j}= \begin{cases} 1 \text{ if } \hat{\Ma{\Theta}}_{i,j} \geq \epsilon \\
    0 \text{ otherwise.}
    \end{cases}
    \Ma{A}_{i,j}= \begin{cases} 1 \text{ if } \Ma{\Theta}_{i,j} \geq \epsilon \\
    0 \text{ otherwise.}
    \end{cases}
\end{equation*}
For each $i<j$ the edge $ij$ in $\mathcal{\hat A}$ is present if $\mathcal{\hat A}_{i,j}=1$, and similar for $\Ma{A}$. Now define the following quantities.
\begin{itemize}
    \item \textit{True Positives:} TP = number of edges $ij$, which are both in $\mathcal{\hat A}$ and in $\Ma{A}$
    \item \textit{True Negatives:} TN = number of edges $ij$, which are both not in $\mathcal{\hat A}$ and not in $\Ma{A}$
    \item \textit{False Positives:} FP = number of edges $ij$ in  $\mathcal{\hat A}$ but not in $\Ma{A}$
    \item \textit{False Negatives:} FN = number of edges $ij$ in $\Ma{A}$ but not in $\mathcal{\hat A}$
\end{itemize}
Then define the graph recovery measures.
\begin{itemize}
    \item \textit{F1 score:} F1 $= \tfrac{2 \text{TP}}{ 2 \text{TP+FN+FP}}$
    \item \textit{Matthews correlation coefficient:} MCC $= \tfrac{\text{TP} \times \text{TN - FP} \times \text{FN}}{\sqrt{ \text{(TP+FP)(TP+FN)(TN+FP)(TN+FN) }}}$ 
\end{itemize}
Note that both of the latter measures take values in $[0,1]$ with 0 being the worst and 1 being the best value. The threshold $\epsilon$ in the simulations is fixed as $\epsilon=10^{-5}$. When interpreting later on the results shown in Figure~\ref{fig:F1} and Figure~\ref{fig:MCC}, one should keep in mind that the \RP\ algorithm produces non-sparse solutions and therefore almost no true negatives and false negatives are produced for $\epsilon=10^{-5}$. Moreover, Model~1 is not sparse and thus analyzing the graph recovery measures in this model is not meaningful.

\begin{figure}[H]
	\centering
  \includegraphics[width=0.9\textwidth]{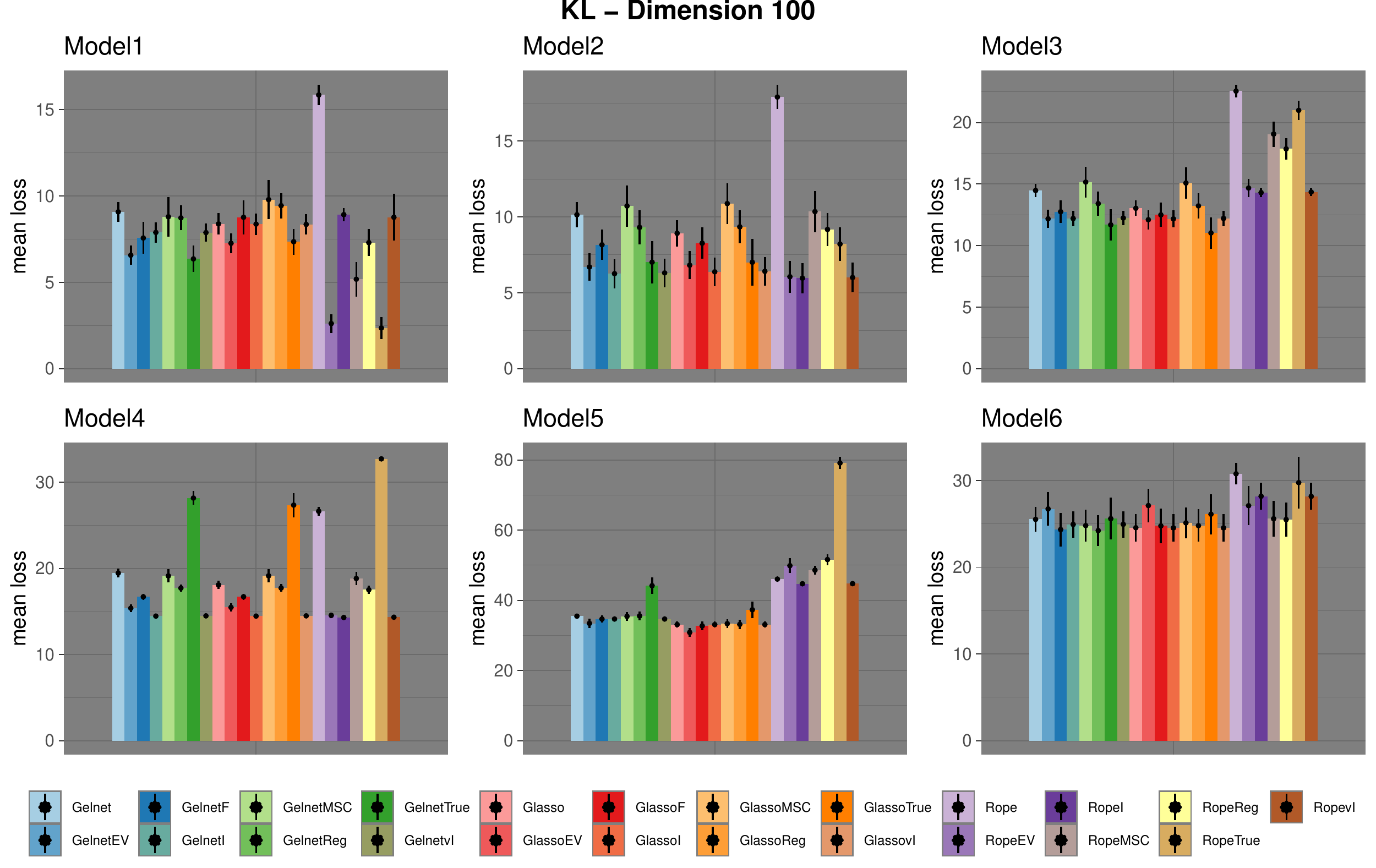}
  \caption{Summary of KL loss for the different models and different methods based on 100 replications. The columns along the small black dots indicate mean losses. The bars on the top of each column show the standard deviations (mean $\pm$ SD). Plot layout is taken from \cite{Kuismin_ROPE}.}
  \label{fig:KL}
\end{figure}

\begin{figure}[H]
	\centering
  \includegraphics[width=0.9\textwidth]{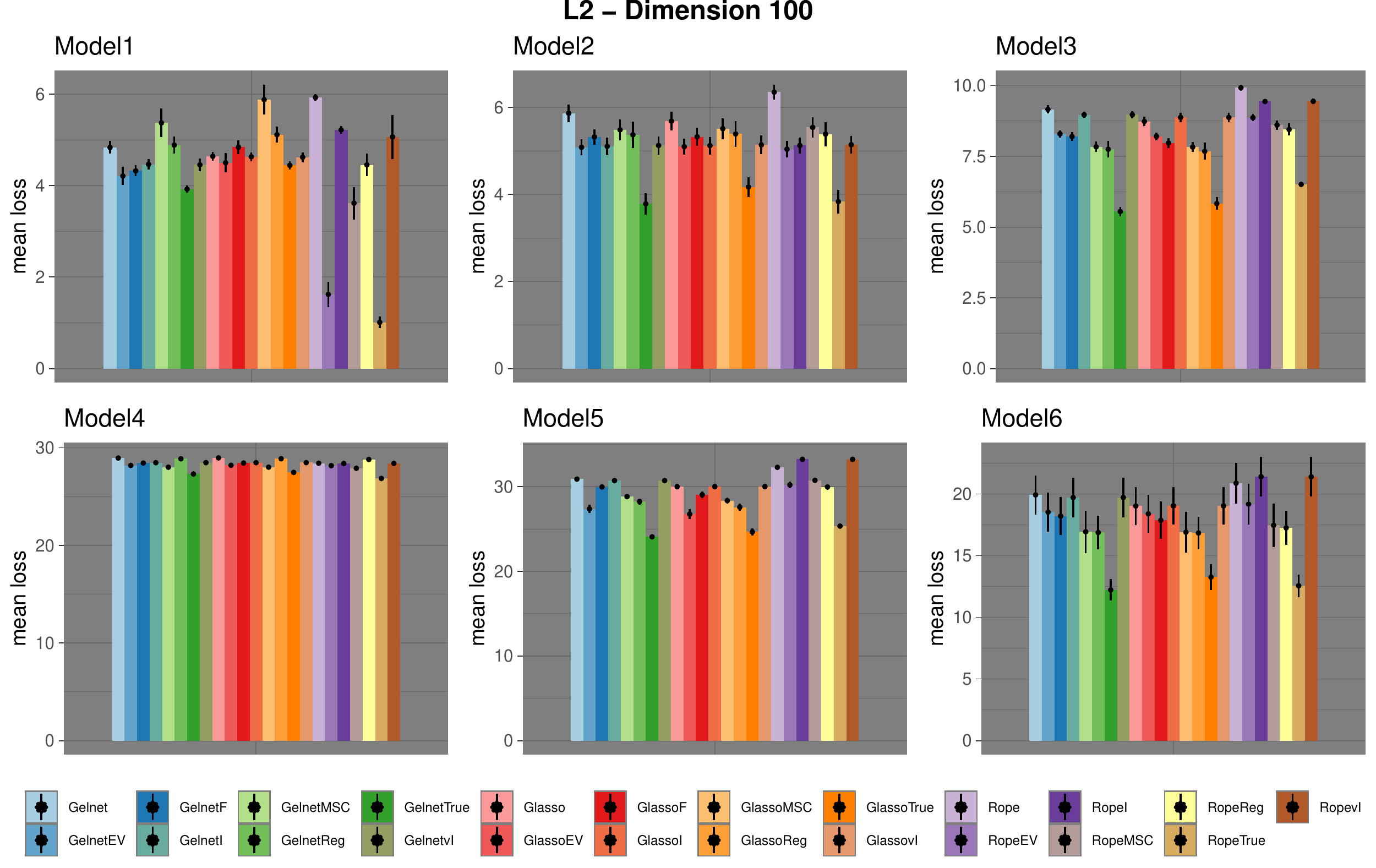}
  \caption{Summary of L2 loss for the different models and different methods based on 100 replications. The columns along the small black dots indicate mean losses. The bars on the top of each column show the standard deviations (mean $\pm$ SD). Plot layout is taken from \cite{Kuismin_ROPE}.}
    \label{fig:L2}
\end{figure}

\begin{figure}[H]
	\centering
  \includegraphics[width=0.9\textwidth]{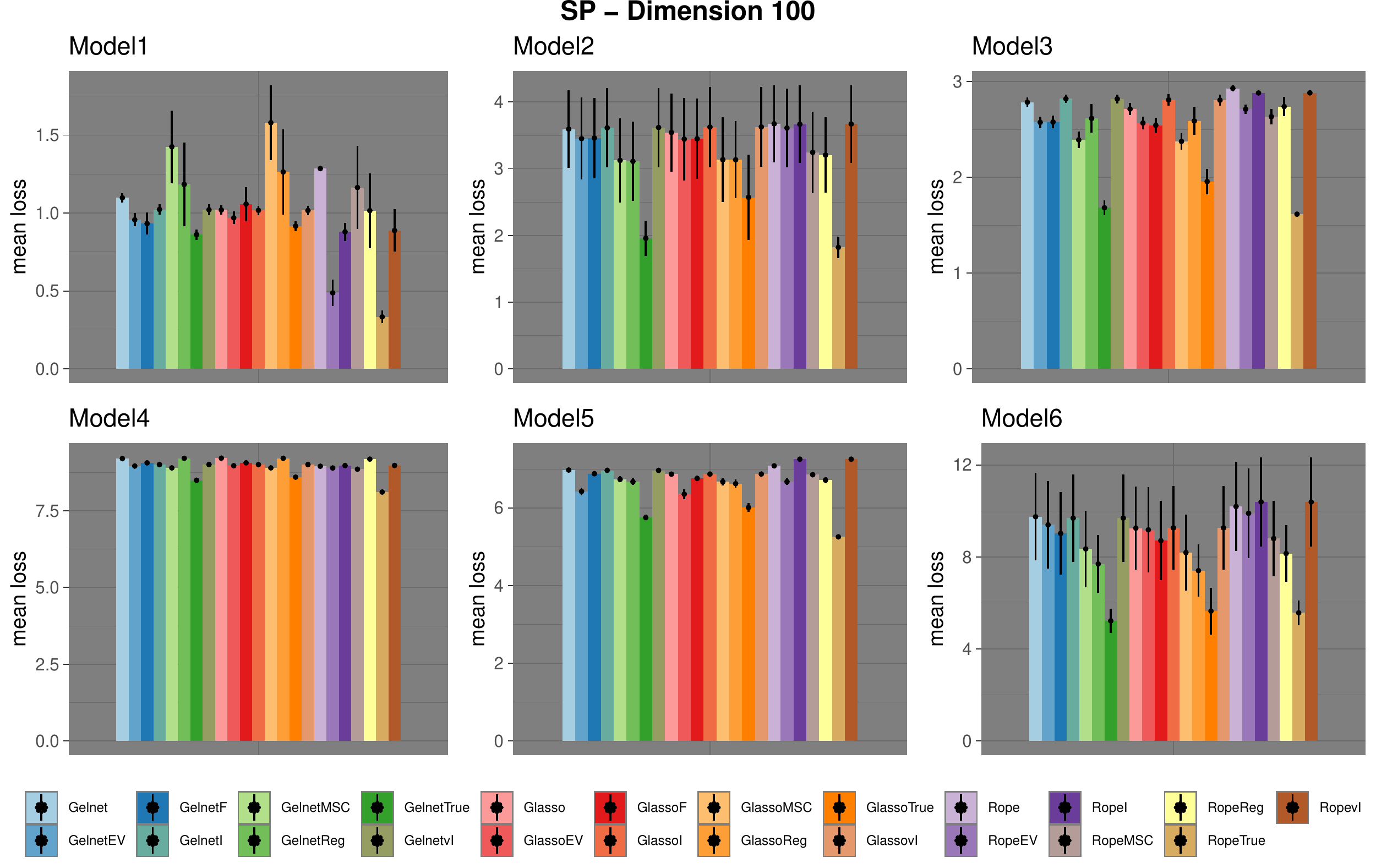}
  \caption{Summary of SP loss for the different models and different methods based on 100 replications. The columns along the small black dots indicate mean losses. The bars on the top of each column show the standard deviations (mean $\pm$ SD). Plot layout is taken from \cite{Kuismin_ROPE}.}
    \label{fig:SP}
\end{figure}

\begin{figure}[H]
	\centering
  \includegraphics[width=0.9\textwidth]{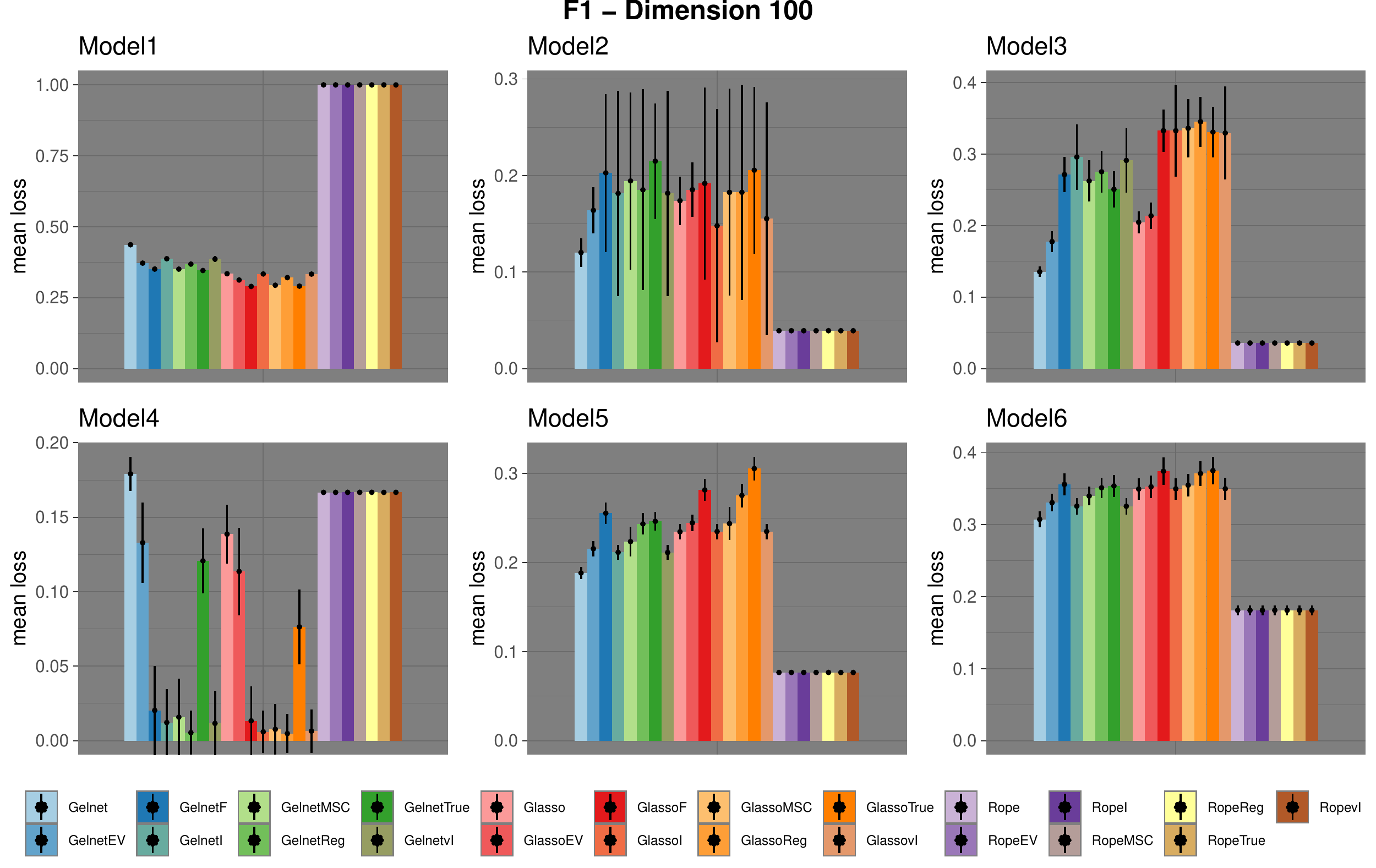}
  \caption{Summary of F1 score for the different models and different methods based on 100 replications. The columns along the small black dots indicate mean scores. The bars on the top of each column show the standard deviations (mean $\pm$ SD). Plot layout is taken from \cite{Kuismin_ROPE}.}
    \label{fig:F1}
\end{figure}
\begin{figure}[H]
	\centering
  \includegraphics[width=0.9\textwidth]{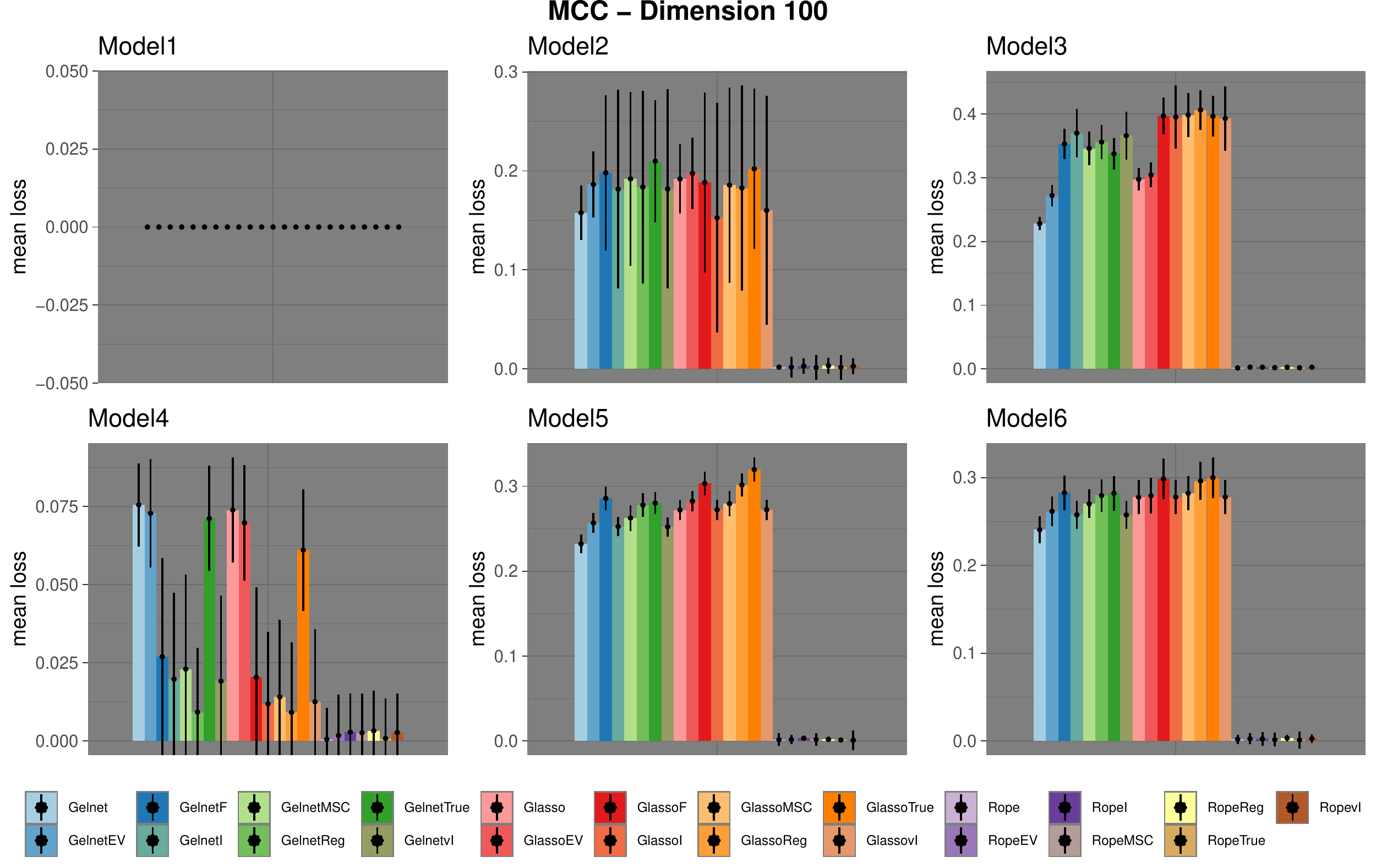}
  \caption{Summary of MCC score for the different models and different methods based on 100 replications. The columns along the small black dots indicate mean scores. The bars on the top of each column show the standard deviations (mean $\pm$ SD). Plot layout is taken from \cite{Kuismin_ROPE}.}
    \label{fig:MCC}
\end{figure}

\paragraph{Details on the methods used.}
Note that due to the $L_2$-penalty term, the scaling of the variables matters. Thus, as input $\textbf{S}$, we recommend to use the sample correlation rather than covariance in practice. For the Graphical Elastic Net we fixed $\alpha=\tfrac{1}{2}$ for the Elastic Net penalty in equation~\eqref{problem_setup} whenever using it. The simulations were done with and without targets as well as with and without penalizing the diagonal. Notice that in our case the targets are diagonal matrices and not penalizing the diagonal automatically results in no target. The following abbreviations are used in the figures, and the target matrices mentioned here are described in section \ref{sec:target_types}.
\begin{itemize}
    \item F: Setting the penalization parameter to FALSE, i.e. no penalization of the diagonal.%
    \item True: Using the \textit{True Diagonal} target matrix.
    \item I: Using the \textit{Identity} target matrix.
    \item vI: Using the $v$-\textit{Identity} target matrix.
    \item EV: Using the \textit{Eigenvalue} target matrix.
    \item MSC: Using the \textit{Maximal Single Correlation} target matrix.
    \item Reg: Using the \textit{Nodewise Regression} target matrix.
\end{itemize}

\paragraph{Simulation results.}
The results are displayed in Figures~\ref{fig:KL}, \ref{fig:L2}, \ref{fig:SP}, \ref{fig:F1} and \ref{fig:MCC}. As the \DPGN\ algorithm leads to indistinguishable results compared to the \GN\ algorithm, we left it out from the figures and also in the discussion below. Various performance measures might yield different ranking of the methods, such that it is hard to draw an overall conclusion on which method is superior as well as a general recommendation on which method or target matrices to use. Nonetheless, we highlight here a few observations based on the results shown in the figures:
\begin{itemize}
    \item Not penalizing the diagonal of $\Ma{\Theta}$ leads to smaller losses than penalizing with zero as a target.
    \item While the message from \cite{Kuismin_ROPE}, that \RP\ with target performs better than \GL\ with no target, in some of the models holds true, there are some models opposing this statement in general.
    \item In general, it is of advantage to include a target in the algorithms, as each method performs better if a suitable target is chosen.
    \item The question of how to determine the optimal target matrix is still open. Our simulations show that the \textit{True Diagonal} target matrix performs best in terms of L2-Loss and SP-Loss and thus approaching the diagonal of the underlying precision matrix is desired. However, the \textit{True Diagonal} target is not necessarily the winner in terms of \textit{Kullback-Leibler loss}.
    \item There is a systematic bias for the estimated diagonal entries of the covariance matrices if diagonal entries are penalized. For example, for the standard \GL\ algorithm with zero as target matrix, the diagonal of $\hat{\mathbf{W}}$ % 
    is set as $\hat{w}_{i,i}=s_{i,i}+\lambda$ (if the diagonal is penalized). Similarly, bias of the diagonal entries occurs also for Elastic Net penalties and target matrices. In all these cases, one could try re-scaling the matrix, such that the $\hat{w}_{i,i}=s_{i,i}$. All statements above would still hold, but are often less clearly visible.
\end{itemize}

\subsection{Conclusions on empirical performances}
Including a reasonable target matrix seems to improve the estimation and is thus recommended. The intuition that the true diagonal (or something close to it) is the best diagonal target does not necessarily hold. Furthermore, there is no overall winner. Depending on the underlying precision matrix and the considered performance measure, different target types might be better suited. Hence, we recommend to explore different target types as a tool to gain better insight into the data. The benefits of Elastic Net penalties are not clearly visible in the considered simulations. Bigger differences are expected for highly correlated variables, analogous to the findings in regression by \cite{elastic_net}, where Elastic Net could potentially lead to more stable estimates.

\section{Computational times} \label{chp:comptime}
Our implementations in the \oursoftware\ \Rsoftware-package build upon the \textsf{glasso} \citep{glasso_package} and the \textsf{dpglasso} \cite{dpglasso_package} \textsf{R}-packages by suitably modifying them. Additionally, we also utilized a faster implementation from the
\textsf{glassoFast} \Rsoftware-package \citep{glassofast_package, glassofast_paper} that avoids unnecessary copying of subsets of the working covariance matrix when setting up row and column updates, which causes some inefficiency for the \textsf{glasso} package. Moreover, whenever possible, we even improve on the original versions. For example, to achieve further speed-ups, we incorporate block diagonal screening \citep{Mazumder2012_connected_components, Witten2011_connected_components} which was missing in the \textsf{dpglasso} and \textsf{glassoFast} packages. Compared to the \textsf{dpglasso} package, our \textsf{dpgelnet} implementation relies even more on \textsf{Fortran}, and it avoids unnecessary copying of working covariance matrices (in the style of the \textsf{glassoFast} package), which are beneficial for its speed.

\subsection{Comparing to existing Graphical Lasso implementations}
Figure~\ref{glasso_time_benchmark} compares the computational speed for Graphical Lasso estimation problems (i.e.,~$\alpha=1$) using the implementations from the \oursoftware, \textsf{glasso} and \textsf{glassoFast} \Rsoftware-packages. Our \textsf{gelnet} implementation in the \oursoftware\ package is implemented similar to the \textsf{glassoFast} package and hence, similarly fast. However, we additionally included the block-diagonal screening rule of \cite{Mazumder2012_connected_components, Witten2011_connected_components}. As shown on the right plot of Figure~\ref{glasso_time_benchmark}, in case the problem can be decomposed into connected components (see also Lemma~\ref{thm:connected_components}), our implementation can be considerably faster. There might be problems though where everything (or at least a predominant majority of the nodes) belong to the same connected components, see the left of Figure~\ref{glasso_time_benchmark}. In such cases our implementation tends to be slightly slower compared to the \textsf{glassoFast} package, because checking for the connected components has to be done and additionally slightly more computations are carried out because our implementation can handle general Elastic Net penalties rather than only fine tuned computations for the Graphical Lasso case of $\alpha=1$.

\begin{figure}[H]
	\centering
  \includegraphics[width=0.95\textwidth]{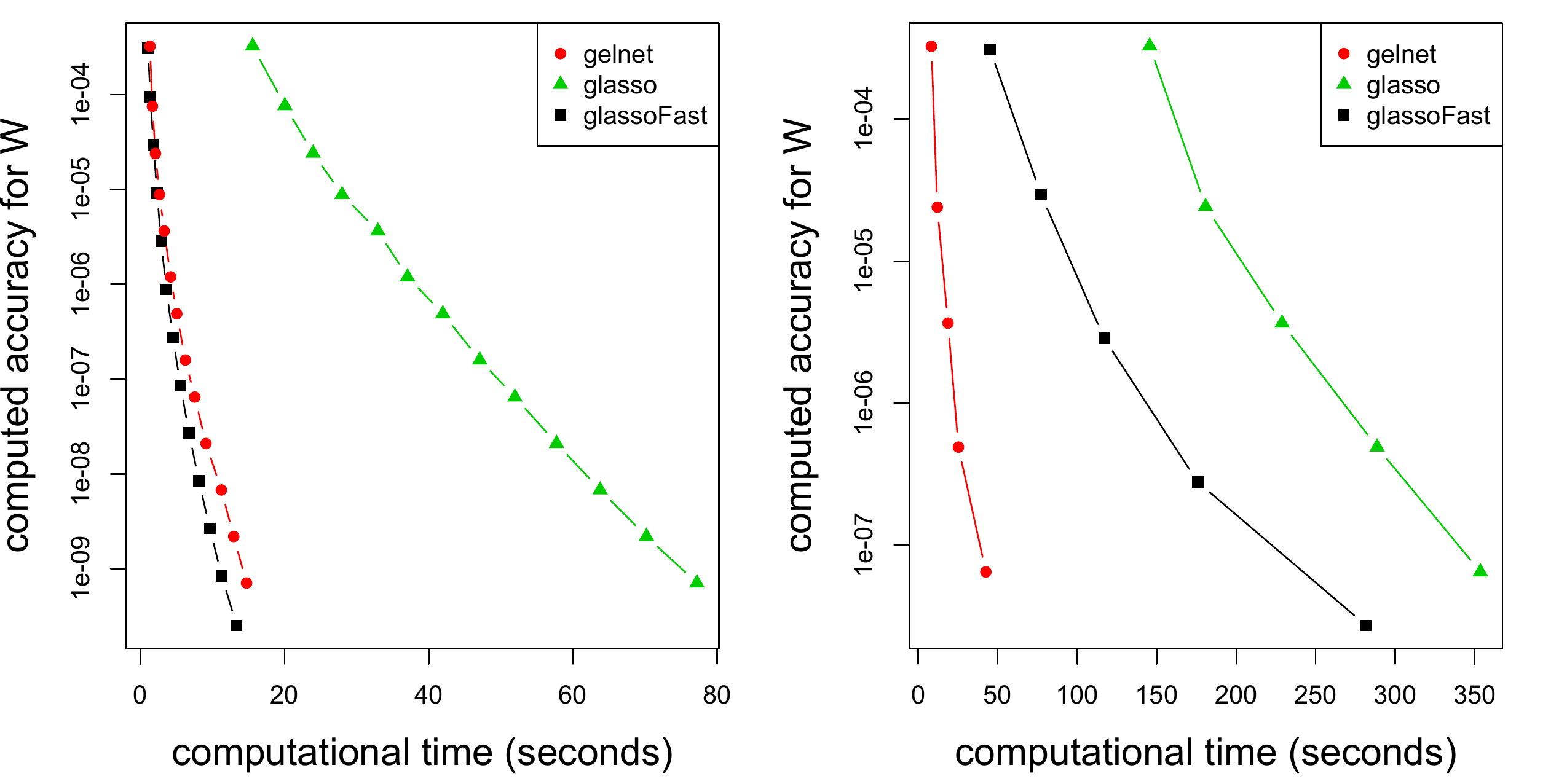}
  \caption{Computed accuracy (maximal absolute error of the off-diagonal entries) for the working covariance matrix W (on a logarithmic scale) vs.~average computational times of $10$ runs each. On the left the input correlation matrix is the one from the ER dataset ($p=692$, available in the \textsf{QUIC} \Rsoftware-package of  \citealp{QUIC_Hsieh}). On the right analogous results are shown for a $p=3460$-dimensional problem composed of five blocks of size $692$ each, i.e., the input here is a block diagonal matrix with five blocks, where the correlation matrix of the ER dataset was used for each of the five blocks. In all cases a standard Graphical Lasso estimator (i.e.,~$\alpha=1$) with tuning parameter $\lambda = 0.3$ was computed using the implementations from the \oursoftware, \textsf{glasso} and \textsf{glassoFast} \Rsoftware-packages.}
  \label{glasso_time_benchmark}
\end{figure}

\subsection{Computational cost for target matrices and Elastic Net penalties}

We introduced new methodology for sparse precision matrix estimation and in the following we would like to demonstrate efficient implementations also for these new proposals. 
To measure the computational time of the different methods the \textsf{FHT} data set available e.g.~in the \textsf{R}-package \textsf{gcdnet} \citep{gcdnet_package}, is used. It contains $n=50$ observations and $p=100$ variables, for which we calculate the empirical correlation matrix $\textbf{S} \in \mathbb{R}^{p \times p}$. % 
For a fixed penalty parameter $\lambda$ each method computes the solution 20 times without warm start and 20 times with a warm start coming from the solution with a similar parameter $\lambda$. The average computational times are displayed in Figure~\ref{Bench}. The abbreviations for the algorithms are similar to those of section~\ref{chp:simulation} with the small additions that the number in the brackets denotes the value of $\alpha$ and IRfGlassoT is the Iterative Ridge for Glasso with a target matrix as proposed and implemented by \cite{Wieringen_iterative}. Table~\ref{tab:abbreviation} summarizes the abbreviations of the methods and the availability of the implementations. The target matrix $\textbf{T}$ used in these simulation is always the \textit{Identity} matrix.

The plots on the left of Figure~\ref{Bench} are based on using values $\lambda$, such that only one connected component is present. Note that the required $\lambda$ differs for various $\alpha$ values in the Elastic Net penalty (see Lemma~\ref{thm:connected_components}). \GN\ and \DPGN\ are competitive with \GL\ and \DPGL\ both in the cold (on top) and in the warm start case (on the bottom) for $\alpha=0.5$ and $\alpha=1$. For small $\alpha$ they slow down in speed and the closed form \RP\ for $\alpha=0$ is much faster. However, note here, that the closed form for \RP\ does not allow entry-wise differing $\lambda$ values which would be possible using our iterative \GN\ and \DPGN\ algorithms with $\alpha = 0$. To further illustrate the efficiency of the implementations, the gains of using connected components are shown on the right panel of Figure~\ref{Bench}. The parameter $\lambda$ is chosen such that the largest connected component is of size 50, whenever this is possible, i.e. for all algorithms except \RP, \GN ($\alpha=0$) and  \DPGN ($\alpha=0$). The timings show that instead of using \DPGL, where only the inner loop of coordinate descent is implemented in Fortran, it is recommended to use \DPGN\ with $\alpha=1$, where both the outer and inner loop are implemented in Fortran, the more efficient general implementation in the style of the \textsf{glassoFast} package is included, and the check for connected components is performed prior to computations in order to gain further computational efficiency. Besides the speedup resulting from solving the problem within several smaller blocks due to the results on connected components, this setup is also much more sparse (due to a larger value of the tuning parameter $\lambda$) for Elastic Net based estimators, which leads to considerable speedups compared to the left panel. In particular, Elastic Net based sparse estimators (with $\alpha = 0.5$ and $\alpha=1$) are even faster in this case than the closed form solution for \RP.

The only competitor that is able to incorporate a target matrix into the Graphical Lasso is the Iterative Ridge for Glasso of \cite{Wieringen_iterative}. The available implementation of the latter approach is limited to $\alpha=1$, and what is even more critical is that its speed is orders of magnitude behind all of our presented algorithms. In particular the Iterative Ridge for Glasso approach is much slower than our \GN\ algorithm with target matrices, with the latter also enabling targets in the full range of $\alpha\in [0,1]$. Lastly, note again that the \textsf{glassoFast} implementation of the \GL\ algorithm is considerably faster than the one from the \textsf{glasso} package. Our \textsf{gelnet} implementation (for $\alpha=1$) is approximately as fast as the \textsf{glassoFast}.

\begin{table}[H]
\caption{\label{tab:abbreviation} Abbreviations of the implementations used for Figure~\ref{Bench}}
\centering
\fbox{\begin{tabular}{cccc} 
\em abbreviation    & \em problem, algorithm  & \em implementation (function, package) \\  
\hline
Rope            & $\alpha = 0$, closed form         & rope, \oursoftware     \\
Glasso          & $\alpha = 1$, \GL\                & glasso, \textsf{glasso}\\
GlassoFast      & $\alpha = 1$, \GL\                & glassoFast, \textsf{glassoFast}\\
Gelnet(1)       & $\alpha = 1$, \GN\                & gelnet, \oursoftware  \\
Gelnet(0.5)     & $\alpha = 0.5$, \GN\              & gelnet, \oursoftware     \\
Gelnet(0)       & $\alpha = 0$, \GN\                & gelnet, \oursoftware     \\
DPGlasso        & $\alpha = 1$, \DPGL\              & dpglasso, \textsf{dpglasso}    \\
DPGelnet(1)     & $\alpha = 1$, \DPGN\              & dpgelnet, \oursoftware   \\
DPGelnet(0.5)   & $\alpha = 0.5$, \DPGN\            & dpgelnet, \oursoftware     \\
DPGelnet(0)     & $\alpha = 0$, \DPGN\              & dpgelnet, \oursoftware      \\
GelnetT(1)      & $\alpha = 1$ with target, \GN\    & gelnet, \oursoftware     \\
GelnetT(0.5)    & $\alpha = 0.5$ with target, \GN\  & gelnet, \oursoftware     \\
IRfGlassoT      & $\alpha = 0.1$ with target,       & supplement of \cite{Wieringen_iterative}\\
                & Iterative Ridge for Glasso with target&                        \\
\end{tabular}}
\end{table}

\begin{figure}[H]
	\centering
  \includegraphics[width=1.0\textwidth]{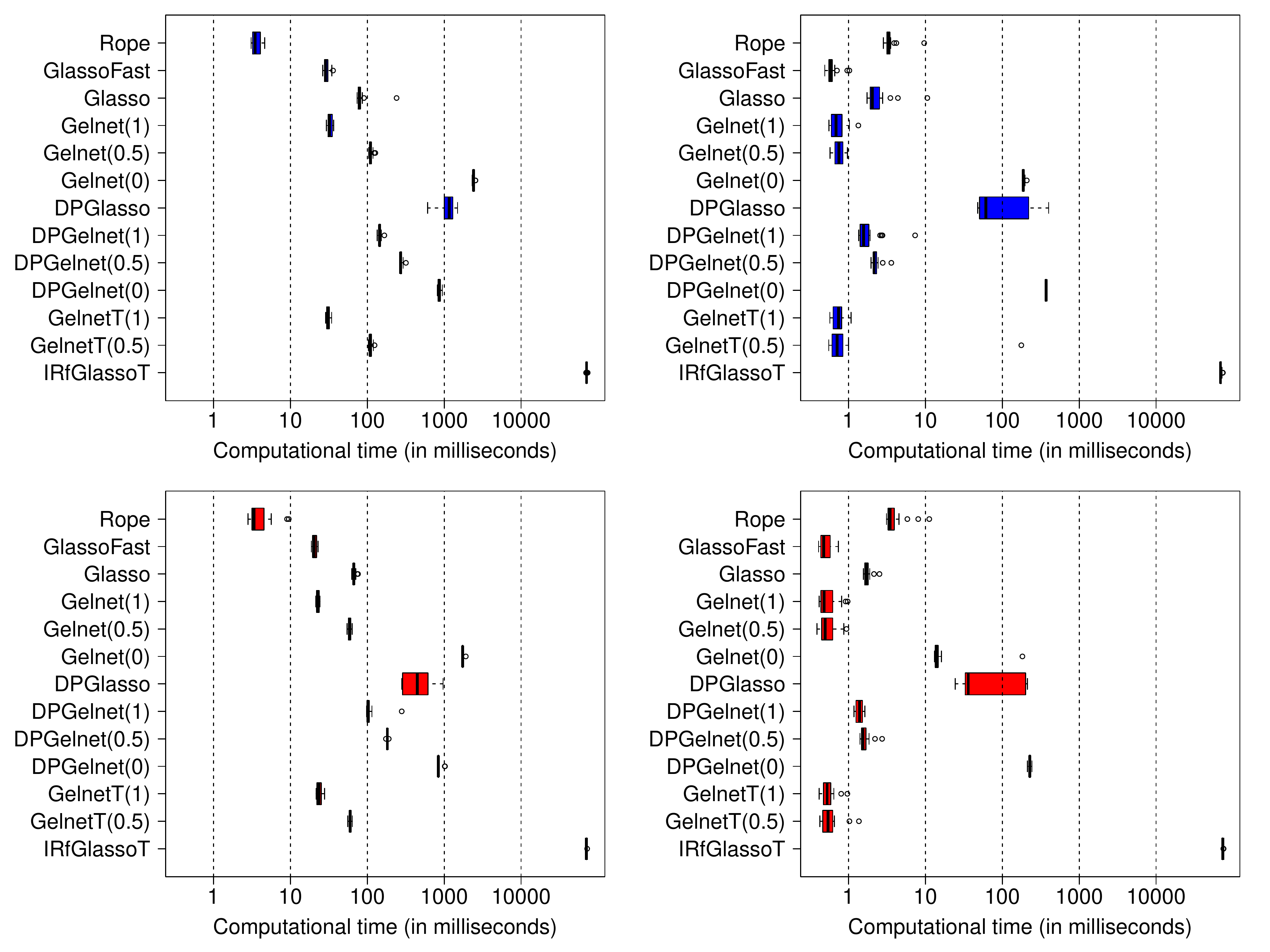}
  \caption{Average computation times for the $100$-dimensional correlation matrix from the \textsf{FHT} dataset with a fixed penalty parameter with a cold starts (on the top in blue) or a warm start, coming from a similar penalty parameter (on the bottom in red). The plots on the right used $\lambda$ such that $\mathbf{\Theta}$ disconnected into connected components of maximum size 50, whereas $\lambda$ of the plots on the left is chosen such that $\mathbf{\Theta}$ stayed fully connected. For the explanations of the names, see Table~\ref{tab:abbreviation}.}
  \label{Bench}
\end{figure}

\subsection{Conclusions on computational times}
By combining the best parts of various available implementations, our software often improves previous DP-Graphical Lasso implementations considerably, and it is also as fast, or in some scenarios even faster than existing efficient Graphical Lasso implementations. Moreover, our software generalizes beyond the case $\alpha=1$ to incorporate Elastic Net penalties and the Graphical Elastic Net variant also allows to include a diagonal target matrix. Computation with Elastic Net penalties takes typically slightly longer than with the $L_1$-penalty only. Our implementation for target matrices is also efficient, and it is orders of magnitude faster than the competing but ad-hoc Iterative Ridge algorithm of \cite{Wieringen_iterative}.

\section{Conclusions} \label{chp:conclusions}

We consider Elastic Net type penalization for precision matrix estimation based on the Gaussian log-likelihood. To resolve this task, we propose two novel algorithms \GN\ and \DPGN. They substantially build on earlier algorithmic work for the \GL\ for precision matrix estimation (\citealp{glasso, Mazumder2012_dpglasso}), the Elastic Net for regression \citep{elastic_net} and the inclusion of target matrices to encode some prior information towards which the estimator is shrunken (\citealp{Wieringen_Peeters, Kuismin_ROPE}). Advantages of our new work include methodological extensions enabling higher flexibility for data analysis, the important option for including diagonal target matrices into the estimation methods, user-friendly software, efficient implementation and the possibility to choose different methodological versions within the same software package.

\DPGN\ optimizes over the desired precision matrix in each block update, whereas \GN\ works with the covariance matrix, the inverse of the precision matrix. There is no overall advantage of using one over the other algorithm and computational performance gains depend on the true underlying signal. All our algorithms are implemented efficiently in the \Rsoftware-package \oursoftware\ using a combination of \Rsoftware\ and Fortran code. They are competitive in terms of computational times with competing methods.

There seems to be no overall winner in terms of statistical performance. The inclusion of an $L_2$-norm into the Elastic Net penalty encourages additional stability, especially in presence of highly correlated variables. Moreover, our simulations show that target matrices and not penalizing the diagonal can be powerful tools to improve the estimation of precision matrices. Our new software in the \Rsoftware-package \oursoftware\ and its algorithmic methodology provide a unifying tool to use various modifications of the popular \GL\ algorithm.

\section*{Acknowledgement}
Solt Kov\'acs and Peter B\"uhlmann have received funding from the European Research Council (ERC) under the European Union's Horizon 2020 research and innovation programme (Grant agreement No. 786461 CausalStats - ERC-2017-ADG). 

\bibliographystyle{apalike}
\bibliography{ref}

\appendix
\section{Appendix}
\label{Appendix}
\subsection{Details on target matrices}
\label{app:target}

We discuss the inclusion of a positive semi-definite (diagonal) target matrix $\textbf{T}$ into the Graphical Elastic Net problem. We aim to solve problem \eqref{problem_setup} which we recall here:
\begin{equation*}
       \hat{\mathbf{\Theta}}(\lambda, \alpha, \textbf{T}) = 
    \underset{\mathbf{\Theta} \succ 0 }{\text{argmin}}   \{ -\text{log det} \Ma{\Theta} + \text{tr} (\textbf{S} \Ma{\Theta}) + \lambda (\alpha \norm{\Ma{\Theta}- \textbf{T}}_1+\tfrac{1-\alpha}{2} \norm{\Ma{\Theta}-\textbf{T}}_2^2) \} .
\end{equation*}
This optimization problem is difficult to solve efficiently in general. We propose to simplify the problem by considering only positive semi-definite \textbf{diagonal} target matrices. When considering diagonal target matrices with non-negative entries, the normal equations for the diagonal entries are different, while for the other entries the normal equations remain the same. For each diagonal entry three cases can occur, leading to the following normal equations:\\ \\
Case 1: $\theta_{22} > t_{22}$,  then
\begin{equation*}
 w_{22} = s_{22} + \lambda \alpha + \lambda (1-\alpha) (\theta_{22}-t_{22})  .
\end{equation*}
Case 2: $\theta_{22} = t_{22}$,  then
\begin{equation*}
 w_{22} = s_{22} + \lambda \alpha u, \text{ where } u \in [-1,1]. 
\end{equation*}
Case 3: $\theta_{22} < t_{22}$,  then
\begin{equation*}
 w_{22} = s_{22} - \lambda \alpha + \lambda (1-\alpha) (\theta_{22}-t_{22}) .
\end{equation*} 
Note that in the traditional case when zero is the target case~1 occurs always. In the general case, however, one cannot automatically update diagonal elements based on case~1. The difficulty is that one does not know in advance the precision matrix $\mathbf{\Theta}$ and hence, whether updates corresponding to case 1, 2 or 3 would be suitable for the diagonal entries. This decision has to be made ``on the fly''. In the following we derive the technical modifications on how to modify the \GN\ algorithm. Overall, in contrast to the algorithm without target, the initialization and updating for the diagonal entries are different.

\paragraph{Special Case.} 
In the special case, where all off-diagonal elements of $\Ma{S}$ are less or equal to $\lambda\alpha$, the solutions $\Ma{\Theta}$ and $\Ma{W}$ are by the results on connected components (see Theorem \ref{thm:connected_components}) diagonal matrices and thus $\theta_{22} = \tfrac{1}{w_{22}}$ for each diagonal entry. If $\alpha=1$ then $w_{22}=s_{22}+\lambda$ and thus $\theta_{22}=\frac{1}{s_{22}+\lambda}$. We consider now $\alpha \neq 1$. By the condition that $\theta_{22}>0$ and $w_{22}>0$ only one case per entry can arise.\\ \\
Case 1: $\theta_{22} > t_{22} \geq 0$ and $\frac{1}{\theta_{22}}=w_{22} = s_{22} + \lambda \alpha + \lambda (1-\alpha) (\theta_{22}-t_{22}) > s_{22} + \lambda \alpha$. 
Therefore $ t_{22} < \theta_{22} < \frac{1}{s_{22} + \lambda \alpha}$. 
In other words case 1 is fulfilled if $t_{22} \in [0, \frac{1}{s_{22} + \lambda \alpha})$.
\\ \\
Case 2: $\theta_{22} = t_{22} \geq 0$ and $\frac{1}{t_{22}}=\frac{1}{\theta_{22}}=w_{22} = s_{22} 
 + \lambda \alpha u, \text{ where } u \in [-1,1]$.
Assume $\lambda \alpha < s_{22}$. Then $t_{22}= \frac{1}{s_{22} 
 + \lambda \alpha u}$
, which is equivalent to $t_{22} \in  [\frac{1}{s_{22} + \lambda \alpha}, \frac{1}{s_{22} - \lambda \alpha}]$. 
On the other hand if $\lambda \alpha \geq s_{22}$, then $t_{22} \in  [\frac{1}{s_{22} + \lambda \alpha}, \infty)$ follows, since the target cannot be negative.
\\ \\
Case 3: $\theta_{22} < t_{22} \geq 0$ and $\frac{1}{\theta_{22}}=w_{22} = s_{22} - \lambda \alpha + \lambda (1-\alpha) (\theta_{22}-t_{22}) < s_{22} - \lambda \alpha$. 
Therefore, if $\lambda \alpha \geq s_{22}$ then case 3 will never be fulfilled. Else $ t_{22} > \theta_{22} > \frac{1}{s_{22} - \lambda \alpha}$. 
In other words case 3 is fulfilled if $\lambda \alpha < s_{22}$ and $t_{22} \in (\frac{1}{s_{22} - \lambda \alpha}, \infty)$.
\\ \\
Hence, for this special case, one can get the exact values of $\theta_{22}$ by first determining with $t_{22}$ which case occurs. If $\theta_{22} \neq t_{22}$ a quadratic equation has to be solved. The values for $w_{22}$ follow by $w_{22}=\frac{1}{\theta_{22}}$. \\ \\ 
Case 1: 
\begin{equation}\label{case1}
    \theta_{22} = \frac{-(s_{22}+\lambda \alpha - \lambda (1-\alpha) t_{22}) + \sqrt{(s_{22}+\lambda \alpha - \lambda (1-\alpha) t_{22})^2 + 4\lambda(1-\alpha) } }{2 \lambda (1-\alpha)}
\end{equation}
Case 2: \hspace{0.74cm} $\theta_{22}=t_{22}$
\\ \\
Case 3: 
\begin{equation*}
    \theta_{22} = \frac{-(s_{22} - \lambda \alpha - \lambda (1-\alpha) t_{22}) + \sqrt{(s_{22} - \lambda \alpha - \lambda (1-\alpha) t_{22})^2 + 4\lambda(1-\alpha) } }{2 \lambda (1-\alpha)}
\end{equation*}

\paragraph{General Case.}
Note that in the general case, where the off-diagonals of $\mathbf{S}$ are no longer smaller or equal to $\lambda \alpha$ the special case still gives some intuition. Namely, for small values in the target case 1 is present and the larger the values of the target get, the more likely case 3 is the suitable one. As we assume that the target matrix is diagonal, and hence has non-zero elements only at the diagonal entries, the core of the \GN\ algorithm stays the same. The changes to be made are the following:
\begin{itemize}
    \item Initialize such that both $\mathbf{W}$ and $\mathbf{\Theta}$ are positive semi-definite and preferably the suitable diagonal case is chosen.
    \item After solving the quadratic programs, update such that the diagonal entries can switch cases if needed.
\end{itemize}

\paragraph{Updates.} The updates of  $\hat{w}_{22}$ and $\hat{\theta}_{22}$ (Steps 2 d and f in Algorithm \ref{alg:algorithm3}) differ from the \GN\ algorithm without target. To ensure that $\hat{w}_{22}$  and  $\hat{\theta}_{22}$ fall into the same case a simultaneous update of both is needed. In practice, the following updates work for realistic targets, but may not converge for targets with very large diagonal entries. However, such targets with overly large diagonal entries are typically anyway not desirable from a statistical perspective.\\ \\
Using the relation from Step 2d in Algorithm \ref{alg:algorithm3} $w_{22}$ can be expressed as $w_{22}=\frac{1}{\theta_{22} + w_{12} \beta}$. Substituting this expression into the diagonal normal equation leads to    \begin{equation} \label{target:equ}
    \lambda \alpha \text{ sgn}(\theta_{22}-t_{22})=\tfrac{1}{\theta_{22}} + w_{12} \beta - s_{22} - \lambda (1-\alpha) (\theta_{22}-t_{22}) .
\end{equation}
Define the function $F$, which represents $\lambda \alpha \text{ sgn}(\theta_{22}-t_{22})$ as:
\begin{equation*}
    F(\theta_{22}) \coloneqq \tfrac{1}{\theta_{22}} + w_{12} \beta - s_{22} - \lambda (1-\alpha) (\theta_{22}-t_{22}) .
\end{equation*}
Note the range of values for $F$ is $[-\lambda \alpha, \lambda \alpha]$. Define for $\theta_{22}=t_{22}$ the function value $F_t$ as $F_t \coloneqq F(t_{22})=\tfrac{1}{t_{22}} + w_{12} \beta - s_{22}.$ 
Since $F$ is strictly decreasing in $\theta_{22} \geq 0 $, we have that:
 $F(\theta_{22}) < F_t$ for $\theta_{22} > t_{22}$  and $F(\theta_{22}) > F_t$ for $\theta_{22} < t_{22}$. \\
By the representation above we need for $\theta_{22} > t_{22}$ that $F(\theta_{22}) = \lambda \alpha$ and for $\theta_{22} < t_{22}$ that $F(\theta_{22}) = -\lambda \alpha$.\\
Considering $F_t \in [-\lambda \alpha, \lambda \alpha]$, we show by contradiction that $\theta_{22}=t_{22}$. \\
Assume $\theta_{22} > t_{22}$ then $\lambda \alpha = F(\theta_{22}) < F_t \leq \lambda \alpha$. \\
On the other hand if $\theta_{22} < t_{22}$ then $-\lambda \alpha = F(\theta_{22}) > F_t \geq -\lambda \alpha$.
\\
By similar arguments one can show that if
$F_t > \lambda \alpha$, we have that $\theta_{22} > t_{22}$ and lastly if $F_t  < -\lambda \alpha$, we have that $\theta_{22} < t_{22}$.
\\
Therefore, the value of $F_t$ determines the case to be considered and can be seen as a ``test''. We need to solve equation \eqref{target:equ} in $\theta_{22}$ dependent on the case chosen.\\ \\
Case 1:  Solve the following for $\theta_{22}$:\begin{equation*}
    0=\lambda(1-\alpha) \theta_{22}^2 + (s_{22}+\lambda \alpha - \lambda (1-\alpha) t_{22} - w_{12} \beta ) \theta_{22} -1.
\end{equation*}
Then update $w_{22} = s_{22}+ \lambda \alpha + \lambda (1-\alpha) (\theta_{22}-t_{22})$. \\ \\
Case 2: $\theta_{22} = t_{22}$, $w_{22}=s_{22}+F_t$. \\ \\
Case 3:  Solve the following for $\theta_{22}$:\begin{equation*}
    0=\lambda(1-\alpha) \theta_{22}^2 + (s_{22}-\lambda \alpha - \lambda (1-\alpha) t_{22} - w_{12} \beta ) \theta_{22} -1.
\end{equation*}
Then update $w_{22} = s_{22} - \lambda \alpha + \lambda (1-\alpha) (\theta_{22}-t_{22})$.

\paragraph{Initialization.} 
Assume $\alpha>0$. As in \GN, the diagonal entries of $\mathbf{\Theta}$ are produced first and then $\mathbf{W}$ is taken as $\mathbf{W}= \mathbf{S} + \lambda \alpha \mathbf{\Gamma} + \lambda (1-\alpha) (\mathbf{\Theta}-\mathbf{T})$. This time $\mathbf{\Gamma}$ is the diagonal matrix with $\gamma_{ii}=1$ if $t_{ii} \in [0, \frac{1}{s_{ii} + \lambda \alpha})$ and $0$ else. If $t_{ii} \in [0, \frac{1}{s_{ii} + \lambda \alpha})$ then define $\theta_{ii}$ as in \eqref{case1} else set $\theta_{ii}=t_{ii}$.

\paragraph{The Final Algorithm.} In the previous paragraphs we discussed the necessary technical modifications regarding updates and initialization for the \GN\ algorithm to incorporate diagonal target matrices. These are again summarized in Algorithm \ref{alg:gelnet_target}.
Note again that for very large entries in the target matrix (which are usually beyond what is reasonable from a statistical perspective) our chosen updates might face convergence issues when used with the \GN\ algorithm. For none of the targets we used in simulations (see section~\ref{sec:target_types}), we experienced convergence issues. These reasonable targets are ``conservative'',~i.e. typically not having overly large entries.

\begin{algorithm}[H]
    \caption{\GN\ algorithm with diagonal target matrix \bf{T}}
  \begin{algorithmic}[1]
    \State %
    Initialize $\Ma{\Theta}_{\text{init}}$ and $\Ma{W}_{\text{init}}$ as described in the \textbf{Initialization} paragraph above.
    \State Cycle around the columns repeatedly, performing the following steps till convergence:
\begin{algsubstates}
        \State Rearrange the rows/columns so that the target column is last (implicitly).
        \State Solve the Elastic Net regression problem \eqref{QP2} with coordinate descent to get $\hat{\Bo{\beta}}$. As warm start for $\Bo{\beta}$ use the solution from the previous round for this row/column.
        \State Update $\hat{\Ma{w}}_{12} = \textbf{W}_{11} \hat{\Bo{\beta}}$, $\hat{\Ma{w}}_{21}=\hat{\Ma{w}}_{12}^T$.
        \State Calculate the test statistic $F_t \coloneqq F(\theta_{22})$ as in the \textbf{Updates} paragraph to determine the case.
        \State Update  $\hat{\theta}_{22}$ according to the case.
        \State Update $\hat{\Bo{\theta}}_{12}=- \hat{\theta}_{22} \hat{\Bo{\beta}}$, $\hat{\Bo{\theta}}_{21}=\hat{\Bo{\theta}}_{12}^T$.
        \State Update $\hat{w}_{22}= \text{s}_{22}+\lambda \alpha+ \lambda (1-\alpha) \hat{\theta}_{22}$.
    \end{algsubstates}
  \end{algorithmic} 
  \label{alg:gelnet_target}
\end{algorithm}

\end{document}